\documentclass[a4paper,11pt]{article}
\usepackage{fontspec}
\usepackage{amsmath,amssymb,amsthm,booktabs,array}
\usepackage[margin=24mm]{geometry}
\usepackage{xurl}
\usepackage{tikz}
\usepackage[hidelinks,unicode]{hyperref}
\hypersetup{pdftitle={Threshold Saturation from a Bounded Region in Multidimensional Spatially Coupled Codes over the BEC},pdfauthor={Kenta Kasai},pdfsubject={Decoding from a bounded shortened region over the BEC}}
\newtheorem{theorem}{Theorem}[section]
\newtheorem{lemma}[theorem]{Lemma}
\newtheorem{proposition}[theorem]{Proposition}
\theoremstyle{definition}
\newtheorem{definition}[theorem]{Definition}
\theoremstyle{remark}
\newtheorem{remark}[theorem]{Remark}
\newcommand{\Z}{\mathbb Z}
\newcommand{\R}{\mathbb R}
\newcommand{\F}{\mathcal F}
\newcommand{\eps}{\epsilon}
\newcommand{\pot}{\epsilon_{\mathrm{pot}}}

\newcommand{\norm}[1]{\left\lVert#1\right\rVert}

\title{Threshold Saturation from a Bounded Region\\in Multidimensional Spatially Coupled Codes over the BEC}
\author{Kenta Kasai\\Institute of Science Tokyo}
\date{}
\begin{document}
\maketitle
\begin{abstract}
We prove that a bounded shortened region initiates decoding throughout multidimensional spatially coupled regular LDPC and MacKay--Neal (MN) codes over the binary erasure channel (BEC). In every fixed finite dimension, uniform hypercube coupling permits the coupling width and shortened region to be chosen independently of the total number $V$ of spatial positions. At fixed widths and dimension $d>1$, replacing a shortened slab by a bounded hypercube reduces the shortening fraction from order $V^{-1/d}$ to order $V^{-1}$, with the same improvement in the shortening term of the check-count rate bound. Regular LDPC codes decode below their uncoupled potential threshold. MN codes achieve capacity for every integer degree choice $\ell>r\geq2$, $g\geq2$: their actual transmitted rates tend to $r/\ell$ and their average bit-erasure probabilities under sum-product decoding vanish below $1-r/\ell$. The proof combines an endpoint-potential identity, removal of an auxiliary constraint, finite-time estimates uniform in direction, and a curvature comparison that transfers flat-boundary progress to expanding balls. For MN codes, elementary inequalities establish fixed-point positivity for all these degrees. Two-dimensional density-evolution examples illustrate the dependence on the initial shortened region; the general sufficient constants are not evaluated numerically.
\end{abstract}

\section{Introduction}\label{sec:intro}
Spatially coupled LDPC constructions originate in convolutional codes with low-density parity-check matrices~\cite{Conv1999}. Decoding can progress from a favorable boundary and exceed the iterative decoding threshold of the uncoupled ensemble. Threshold saturation on the binary erasure channel (BEC) relates this improvement to the potential of the uncoupled recursion~\cite{KRU2011,Maxwell2014}. Universal capacity results on binary-input memoryless symmetric channels and density-valued potential proofs extend the one-dimensional theory~\cite{KRU2013,Kumar2014}. Protograph constructions connect this theory to code design and distance properties~\cite{Mitchell2015}. Multidimensional coupling and shortening of a local hypercube were introduced in~\cite{MD2013}. A region whose size is independent of the system size would make the fraction of shortened positions vanish as the system grows.

The proof follows five steps: positive potentials at nonzero homogeneous fixed points, advancement of a flat boundary, expansion of a ball, decoding throughout the torus, and, for MN codes, capacity achievement. Figure~\ref{fig:proofstory} illustrates these implications. An exact potential identity supplies the passage from fixed points to spatial progress; finite dependence then makes this progress possible from a finite shortened region. The argument applies to both scalar and vector density evolution.

For MacKay--Neal (MN) codes, spatial coupling was studied in~\cite{MN2011}, with BEC capacity results for particular degrees in~\cite{MN2013,MN2014}. Exact certificates also establish the classical MN interior fixed-point inequality for $(\ell,r,g)=(k,j,k)$, $2\leq j<k\leq60$, in a recent quantum-code application~\cite[Theorem 4]{Kasai2026}. We prove the fixed-point inequality required here for every integer $\ell>r\geq2$, $g\geq2$. The proof compares algebraic expressions for the potential along fixed-point branches and reduces them to $r=g=2$. It does not compare density-evolution trajectories at different degrees.

\subsection{Main results}\label{sec:main}
Fix an integer dimension $d\geq1$. Bold symbols denote vectors or vector-valued profiles; their components use ordinary type. Write $\norm{\mathbf{x}}_2$ for the Euclidean norm and $S^{d-1}$ for the unit sphere. The offset from a variable position $\mathbf{i}$ to a check position $\mathbf{i}+\mathbf{k}$ is uniform on
\begin{equation}\label{eq:offsets}
 K_w=\{-w,\ldots,w\}^d,\qquad
 p_w(\mathbf{k})=\frac{\mathbf1_{K_w}(\mathbf{k})}{(2w+1)^d}.
\end{equation}
Definition~\ref{def:ensemble} (socket allocation and shortening) specifies the graph ensemble and shortening. The conventional offsets $\{0,\ldots,2w\}^d$ give the same construction after relabeling check positions.

\begin{theorem}[Local shortening of regular LDPC codes]\label{thm:main}
Fix $d\geq1$, $d_v\geq3$, and $d_c>d_v$. For every $0\leq\eps<\pot(d_v,d_c)$, where the potential threshold in Definition~\ref{def:potential} is the largest erasure parameter with nonnegative scalar potential on $[0,1]$, there exist finite constants $w_0\in\Z_{\geq1}$ and $R_0>0$ with the following property. For any integer $w\geq w_0$, put
\begin{equation}\label{eq:seedside}
 s(w)=2\lceil wR_0\rceil+1.
\end{equation}
On the torus $(\Z/L\Z)^d$, where $L>s(w)+2w$, shorten all variables in a hypercube of side $s(w)$. Sum-product density evolution on the BEC$(\eps)$ then converges uniformly to zero on the torus, for both message and posterior erasure probabilities.
\end{theorem}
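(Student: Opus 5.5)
We follow the five-step scheme announced in the introduction, specialized to scalar density evolution for the $(d_v,d_c)$-regular ensemble.

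Write the uncoupled recursion as $x\mapsto \eps\lambda(\rho(x))$, where $\lambda(x)=x^{d_v-1}$ and $\rho(x)=1-(1-x)^{d_c-1}$. Its scalar potential is
\[
U(x;\eps)=x\rho(x)-\int_0^x\rho-\eps\int_0^{\rho(x)}\lambda .
\]
The coupled recursion on $(\Z/L\Z)^d$ with kernel $p_w$ averages this map twice. The shortened hypercube is held at $0$.

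\textbf{Step 1: positive potential gap.} For $\eps<\pot$ we have $U(x;\eps)>0$ on $(0,1]$, because $U$ is strictly decreasing in $\eps$ for $x>0$. Compactness then gives $\delta>0$ with $U(x;\eps)\geq\delta$ at every nonzero fixed point of the uncoupled map. We can also pick $\eps'\in(\eps,\pot)$ and compare monotonically with BEC$(\eps')$ to gain slack.

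\textbf{Step 2: a flat boundary advances.} Fix a unit direction $\mathbf e\in S^{d-1}$. Consider the half-space problem in which positions with $\langle\mathbf i,\mathbf e\rangle\leq0$ are pinned at $0$ and the others start at $1$. Monotonicity gives a limiting profile that is nondecreasing in $\langle\mathbf i,\mathbf e\rangle$. It is effectively one-dimensional, with the projected kernel of $p_w$ in direction $\mathbf e$.

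The endpoint-potential identity, summed against the discrete derivative of the profile, expresses $U(x_\infty)$ at the far end. It equals a boundary term of order $O(1/w)$ times the total variation, and that term is nonpositive whenever the profile has not been driven to $0$. So for $w\geq w_0$ a nonzero limiting fixed point would force $U(x_\infty;\eps)\leq C/w<\delta$, which contradicts Step 1. We would also have to discharge the auxiliary constraint in the identity; we handle it by comparison with a modified monotone system.

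Because the limit is zero, continuity gives a finite time $T$ such that, after $T$ iterations, the profile is below any chosen small level $\eta$ up to distance $cw$ beyond the pinned boundary. Here $c>0$ and $T$ are independent of $\mathbf e$. The kernel's projections vary continuously in $\mathbf e$, so a compactness argument over $S^{d-1}$ yields this uniformity. We state the bounds in units of $w$, so that they are uniform in $w\geq w_0$ after rescaling; a continuum-limit argument in the spatial variable divided by $w$ gives this.

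\textbf{Step 3: expanding balls.} This is the main obstacle. We want the zero region to grow from a ball of radius $Rw$ to a ball of radius $(R+c/2)w$ within $T$ iterations. The flat-boundary progress only applies to half-spaces, and a ball's complement is larger than a half-space locally.

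The plan is a curvature comparison. Within distance $O(w)$ of a boundary point, the sphere of radius $Rw$ deviates from its tangent plane by $O(w/R)$. The coupled map has finite range $2w$ per half-step, and the number of steps $T$ is fixed. So the Step 2 trajectory, shifted inward by $O(w/R)$, dominates the ball trajectory near that point.

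Choose $R_0$ so large that this $O(w/R_0)$ loss is smaller than $cw/2$. Decoding then also requires the interior to stay at $0$ below $\eta$. A stable-zero argument gives this: $\eta$ is taken inside the basin where $\eps\lambda(\rho(x))\leq x/2$, since $d_v\geq3$. The seed of side $s(w)=2\lceil wR_0\rceil+1$ contains the ball of radius $R_0w$, so the induction can start.

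\textbf{Step 4: the whole torus.} Iterating Step 3 makes the radius grow linearly until the ball covers $(\Z/L\Z)^d$. The condition $L>s(w)+2w$ ensures that the wrapped seed does not interfere with the argument. Once every position lies below $\eta$, the local contraction near $0$ drives all message erasure probabilities uniformly to $0$. The posterior erasure probabilities are $\eps$ times a product of incoming check messages, so they vanish as well.
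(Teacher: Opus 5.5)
\textbf{There is a genuine gap between your Steps~2 and~3.}

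Step~2 proves advancement for a half-space whose favorable side is \emph{pinned at $0$} at every iteration. Step~3 then asserts that this trajectory, shifted inward, dominates the ball trajectory. That holds only for the first expansion, while the ball is still inside the shortened hypercube. The shell gained afterwards is not shortened. For $\eps>0$ no unshortened message ever becomes exactly zero, so your induction hypothesis can only say that messages on the ball are at most $\eta$, and there they evolve freely. Pinning lowers messages, so the pinned half-space trajectory is a \emph{lower} bound for the free one and cannot dominate the ball trajectory. What Step~3 actually needs is that the free recursion from the step $\mathbf h=\boldsymbol\alpha$ on $s\le0$, $\mathbf h=\mathbf b_{\max}$ on $s>0$ becomes small on left half-lines. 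Your remark about a ``modified monotone system'' does not supply this: any system in which messages are capped or pinned by hand bounds the free system from below, not from above.

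\textbf{The missing idea is Proposition~\ref{prop:free}.} Cap at the small positive level rather than at $0$: $\mathbf v_{n+1}=\min\{\mathbf h,\F\mathbf v_n\}$.
\begin{itemize}
\item The limit satisfies $\mathbf v\le\F\mathbf v$. Its left endpoint is $0$ because $\mathbf v_n=T^n(\boldsymbol\alpha)\le\theta^n\boldsymbol\alpha$ far to the left. Its right endpoint is a fixed point of $T$, so the endpoint identity forces $\mathbf v\equiv0$.
\item The positive level is what lets the cap be removed. Once $\mathbf v_N\le\boldsymbol\alpha$ on $s\le2\rho$, the bound $T(\boldsymbol\alpha)\le\boldsymbol\alpha$ gives $\F\mathbf v_N\le\mathbf h$, so $\mathbf v_{N+k}=\F^k\mathbf v_N$. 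A $0$-pin never becomes inactive.
\item Comparing the plateaus $T^N(\boldsymbol\alpha)$ and $T^N(\mathbf b_{\max})$ then yields $\F^{N+k}\mathbf h(s)\le\mathbf v_{N+k}(s+4\rho N)$, which is convergence of the free trajectory.
\end{itemize}

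\textbf{Two further consequences for uniformity.} First, the free trajectory need not decrease in time, so a common time for all directions is not simply the largest covering time. You need the block and residual-time argument of Lemma~\ref{lem:commonhalfspace}. Second, your $O(1/w)$ discrete identity against the gap $\delta$ is a plausible substitute for the exact identity, but it gives times that depend on $w$ and $\mathbf e$. At fixed $w$ the projected discrete update of a step is not sup-norm continuous in $\mathbf e$. Uniformity over $w\ge w_0$ and $\mathbf e\in S^{d-1}$ therefore still has to come from the bounded-density kernels $\omega_{\mathbf e}$ and the finite-time bound of Lemma~\ref{lem:finiteperturbation}.

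With Step~2 repaired this way, your curvature comparison and torus covering go through as in the paper.
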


\begin{theorem}[Local shortening of MN codes below capacity]\label{thm:mnmain}
Fix integers $d\geq1$, $\ell>r\geq2$, and $g\geq2$. For every $0\leq\eps<1-r/\ell$, there exist finite $w_0$ and $R_0>0$, independent of $L$, such that the MN ensemble in Section~\ref{sec:mnmodel}, with offsets~\eqref{eq:offsets}, any $w\geq w_0$, and both variable types shortened in a hypercube of side~\eqref{eq:seedside}, has density evolution converging uniformly to zero on every torus with $L>s(w)+2w$. There are sequences of such codes whose transmitted rates tend to $r/\ell$ and whose average transmitted-bit erasure probabilities under sum-product decoding vanish at every fixed $\eps<1-r/\ell$.
\end{theorem}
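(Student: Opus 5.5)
The proof of Theorem~\ref{thm:mnmain} splits into a density-evolution part and a rate/capacity part. For the density-evolution part, I plan to reuse the general machinery behind Theorem~\ref{thm:main}, applied to a vector recursion with one component per variable type. That machinery takes as input a single hypothesis on the uncoupled recursion: every nonzero homogeneous fixed point in $[0,1]^2$ at erasure parameter $\eps$ has strictly positive potential.

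Once this hypothesis is available, the argument is the same as for LDPC codes. First, the endpoint-potential identity, together with removal of the auxiliary constraint, shows that a flat boundary advances at a speed bounded below uniformly in direction. The bound comes from finite-time estimates and from compactness of $S^{d-1}$. Second, the curvature comparison transfers this progress to a ball of radius $R_0 w$. The curvature error is $O(1/R_0)$ relative to the coupling scale, so for $R_0$ large and $w\ge w_0$ the ball strictly expands. Third, a hypercube of side $s(w)$ contains this ball, so monotonicity of density evolution lets the erasure-free region grow until it covers the torus. The only requirement here is $L>s(w)+2w$, which keeps the seed from interacting with itself through the wrap-around. Uniform convergence to zero then follows from the finite-time estimates. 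I expect these steps to go through verbatim, since the LDPC proof only uses monotonicity, finite range, and the potential structure of a two-type system.

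The main work, and the step I expect to be the real obstacle, is fixed-point positivity for MN codes at every integer $\ell>r\ge2$, $g\ge2$ and every $\eps<1-r/\ell$. My plan is as follows.
\begin{itemize}
\item[(a)] Parametrize the nonzero fixed points of the two-type MN recursion by a single scalar, namely the check-side erasure $y\in(0,1]$. Eliminating the other variables along the fixed-point branch gives both $\eps$ and the potential $U$ as explicit algebraic functions of $y$.
\item[(b)] Show that along each branch $U(y)$, in the regime $\eps<1-r/\ell$, is bounded below by a quantity that is monotone in $g$ and in $r$. The comparison should be made between algebraic expressions, not between density-evolution trajectories. This reduces the claim to the case $r=g=2$.
\item[(c)] For $r=g=2$, prove positivity by elementary inequalities in $y$ and $\ell$, for example Bernoulli- or AM–GM-type bounds on $(1-y)^{\ell-1}$.
\end{itemize}
If the monotone reduction in step (b) fails, I would fall back on a direct sign analysis of $\partial U/\partial y$ along the branch. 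That analysis should show that $U$ attains its minimum at an endpoint, where positivity can be checked directly.

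For the rate and capacity statements, I will take $w$ fixed and $L\to\infty$. The hypercube of side $s(w)$ is bounded, so the shortened fraction is $O(L^{-d})=O(V^{-1})$. Check-side boundary effects vanish on the torus because it has no boundary, so the design rate $r/\ell$ is essentially exact up to the shortening term. To get the actual transmitted rate, I will bound the rank deficiency of the parity-check matrix by showing that the number of linearly dependent checks is $o(V)$, either through the check-count rate bound or through a standard concentration argument. Finally, the density-evolution conclusion, combined with the usual concentration and tree-like-neighborhood argument for fixed iteration count as the graph size grows, shows that the average transmitted-bit erasure probability under sum-product decoding vanishes for every fixed $\eps<1-r/\ell$.
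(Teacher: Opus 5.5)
Your density-evolution route and your fixed-point plan match the paper's. The paper's branch variable $t=a^{1/(\ell-1)}$ is exactly the punctured check-side erasure $Q_1$, and the paper proves $V_{r,g}(t)\ge V_{r,2}(t)\ge\frac r2V_{2,2}(t)>0$ by comparing algebraic expressions at fixed $t$.

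The genuine gap is the transmitted rate. After puncturing, the transmitted code is the projection of $\ker[H_P\ H_T]$, so $k_T=n_T-\operatorname{rank}[H_P\ H_T]+\operatorname{rank}H_P$. Dependent checks lower $\operatorname{rank}[H_P\ H_T]$ and therefore only increase $k_T$; controlling them could at most give an upper bound on the rate. The check count yields $k_T/n_T\ge r/\ell-V_S/(V-V_S)-\operatorname{nullity}H_P/n_T$. The term that can keep the rate below $r/\ell$ is $\operatorname{nullity}H_P$, the punctured configurations invisible on the transmitted coordinates. Your claim that the design rate is ``essentially exact up to the shortening term'' tacitly assumes $H_P$ has almost full column rank. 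Neither the check-count bound nor concentration of erasure probabilities supplies this.

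The missing idea is to reveal all transmitted bits and run peeling on the punctured variables. Each recovered variable has a distinct witnessing check, and these rows form a unit-triangular submatrix of $H_P$. Hence $\operatorname{nullity}H_P\le U_{P,t}(\Gamma)$, the number of punctured variables left unrecovered. Its expectation is governed by density evolution at $\eps=0$, which tends to zero by channel monotonicity from the decoding theorem. The shortened seed is what starts this recovery, since away from it every check has $r\ge2$ unknown punctured sockets. You must then select one graph on which both $U_{P,t}/n_T$ and the transmitted-bit erasure are small; two separately good ensemble averages do not suffice. The paper does this with Markov's inequality applied to $2X_t/\zeta+Y_t/\delta$.

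The limits also need care. The tree-like approximation needs the section size $M\to\infty$ at fixed $L$, $w$, and iteration count $t$. A depth-$t$ neighborhood lies in a spatial window of bounded size, so enlarging $L$ at fixed $M$ does not make it tree-like, and the required $t$ grows linearly in $L$. So choose the geometry, then $t$, then $M$. With $w$ fixed, your codes are reliable only up to the channel parameter used to choose $w$. One sequence reliable at every fixed $\eps<1-r/\ell$ needs $\eps_j\uparrow1-r/\ell$ with $w_j$ chosen accordingly, together with channel monotonicity. The same reliability gives the matching upper bound $k_T/n_T\le1-\eps_j+\delta_j$ through the converse $I(\mathbf X;\mathbf Y)\le n_T(1-\eps_j)$, with no rank analysis of $[H_P\ H_T]$.

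Two smaller omissions concern the density-evolution part. First, the transfer machinery also requires small-message contraction, which gives the final convergence to zero; the finite-time estimates do not. For $g=2$ this needs a non-square rectangle, because $T_2\approx\eps(ra+b)$ is linear near zero. Second, the trivial point $(1,\eps)$ is a fixed point for every $\eps$, so your elimination expressing $\eps$ as a function of $y$ does not cover it, and it must be treated separately. Its potential $U_{\mathrm{MN}}(1,\eps;\eps)=1-r/\ell-\eps$ is the only place the capacity hypothesis enters. Positivity on the nontrivial branch holds for all $0<\eps<1$.
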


The restriction $d_c>d_v$ ensures a positive regular-code design rate; the contraction argument uses $d_v\geq3$. At $\eps=0$, regular-code messages are initially zero, and MN decoding follows by comparison with any positive parameter below capacity. The constants are chosen separately for the two theorems and may depend on the fixed dimension, degrees, and channel parameter. A bounded shortened region means that, after fixing these parameters and a coupling width, its size is independent of $L$. In the MN capacity-achieving sequence, the degrees remain fixed while the coupling width and shortened size may vary as the channel approaches the capacity boundary. The sufficient choices depend on advancement times whose finiteness is proved, but whose numerical size is not evaluated (Section~\ref{sec:quantitative}). Density evolution first takes the number of variables per position to infinity at fixed finite geometry and iteration count. Sections~\ref{sec:rate} and~\ref{sec:mnrate} justify the passage to finite code sequences and their rates. The conclusions concern average bit erasures under the sum-product algorithm (SPA), also called belief propagation (BP).

This potential threshold equals the Maxwell threshold for the positive-rate regular ensembles considered here~\cite[Lemma 49]{Maxwell2014}; the theorem is stated directly in terms of the potential criterion. For example, regular $(3,6)$ codes have potential threshold approximately $0.48815$, below the capacity boundary $0.5$ at rate $1/2$. MN degrees $(\ell,r,g)=(4,2,2),(5,3,4)$, and $(6,4,2)$ are all included in Theorem~\ref{thm:mnmain} (MN decoding and capacity); their capacity boundaries are $1/2$, $2/5$, and $1/3$, respectively.

\subsection{An accessible outline of the proof}\label{sec:outline}
Figure~\ref{fig:proofstory} shows the logical order of the proof. Steps 1--4 apply to both code families. Step 5 gives capacity achievement for MN codes; regular LDPC codes are treated below their potential threshold. Table~\ref{tab:roadmap} gives the corresponding mathematical statements. We prove the common spatial implications first and then verify their code-specific hypotheses in Sections~\ref{sec:regular} and~\ref{sec:mn}.

\begin{figure}[t]
\centering
\def\StoryOne{1. Fixed-point\\potentials}
\def\StoryTwo{2. Flat-boundary\\advancement}
\def\StoryThree{3. Ball\\expansion}
\def\StoryFour{4. Decoding\\throughout the torus}
\def\StoryFive{5. MN capacity\\achievement}
\def\StoryOneRef{Sections~\ref{sec:regular}, \ref{sec:mnpositivity}}
\def\StoryTwoNote{Small messages\\spread}
\def\StoryTwoRef{Section~\ref{sec:core}}
\def\StoryThreeNote{All directions}
\def\StoryThreeRef{Section~\ref{sec:ball}}
\def\StoryFourRef{Section~\ref{sec:torus}}
\def\StoryFiveRef{Section~\ref{sec:mnrate}}
\def\StoryColorSmall{0}
\def\StoryColorLarge{Large}
\def\StoryColorMeaning{Erasure probability in panels 2--4 (schematic)}
\definecolor{StoryBlue}{RGB}{75,132,194}
\definecolor{StoryZero}{RGB}{18,55,108}
\colorlet{StorySmall}{StoryBlue!85!white}
\pgfdeclareradialshading{storyradial}{\pgfpointorigin}{color(0bp)=(StorySmall);color(13bp)=(StorySmall);color(25bp)=(white)}
\begin{tikzpicture}[x=1cm,y=1cm,>=stealth,font=\small,
 title/.style={align=center,text width=2.9cm,anchor=north,font=\small\bfseries},
 detail/.style={align=center,text width=2.9cm,anchor=north,font=\footnotesize}]
\path[use as bounding box] (-1.48,-2.8) rectangle (14.48,2.58);
\foreach \xx/\tt in {0/\StoryOne,3.25/\StoryTwo,6.5/\StoryThree,9.75/\StoryFour,13/\StoryFive}
  \node[title] at (\xx,2.5) {\tt};
\begin{scope}
 \draw[->,gray] (-.75,-.48)--(-.75,1.02) node[left,black] {$U$};
 \draw[gray] (-.85,-.36)--(.88,-.36);
 \fill[StoryZero] (-.4,-.36) circle (.055) node[below,black] {$0$};
 \draw[orange!80!black,thick] (.53,-.36)--(.53,.64);
 \fill[orange!80!black] (.53,.64) circle (.06) node[above,black] {$\mathbf{x}_\star$};
 \node[detail] at (0,-.85) {$U(\mathbf{x}_\star)>0$\\\StoryOneRef};
\end{scope}
\begin{scope}[shift={(3.25,0)}]
 \fill[StorySmall] (-.9,-.55) rectangle (-.3,.95);
 \shade[left color=StorySmall,right color=white] (-.3,-.55) rectangle (.9,.95);
 \draw[gray] (-.9,-.55) rectangle (.9,.95);
 \draw[black!80,thick] (-.22,-.55)--(-.22,.95);
 \draw[black!80,thick,dashed] (.35,-.55)--(.35,.95);
 \draw[->,thick] (-.1,.2)--(.62,.2);
 \node[detail] at (0,-.85) {\StoryTwoNote\\\StoryTwoRef};
\end{scope}
\begin{scope}[shift={(6.5,.2)}]
 \shade[shading=storyradial] (0,0) circle (1.0);
 \draw[black!80,thick] (0,0) circle (.49);
 \draw[black!80,thick,dashed] (0,0) circle (.77);
 \draw[->,thick] (.28,.28)--(.65,.65);
 \node[detail] at (0,-1.05) {\StoryThreeNote\\\StoryThreeRef};
\end{scope}
\begin{scope}[shift={(9.75,0)}]
 \fill[StorySmall] (-.9,-.55) rectangle (.9,.95);
 \draw[step=.3,white!80!StoryBlue,very thin] (-.9,-.55) grid (.9,.95);
 \draw[black!70] (-.9,-.55) rectangle (.9,.95);
 \filldraw[fill=StoryZero,draw=white,line width=.25pt] (-.16,.04) rectangle (.16,.36);
 \node[anchor=west] at (.24,.2) {$S$};
 \node[detail] at (0,-.85) {$\mathbf z_t\longrightarrow0$\\\StoryFourRef};
\end{scope}
\begin{scope}[shift={(13,0)}]
 \draw[rounded corners=3pt,blue!65!black,fill=blue!4] (-1.02,-.55) rectangle (1.02,.95);
 \node[align=center] at (0,.2) {$R\longrightarrow r/\ell$\\[6pt]$P_{\mathrm{bit}}\longrightarrow0$};
 \node[detail] at (0,-.85) {$\eps<1-r/\ell$\\\StoryFiveRef};
\end{scope}
\foreach \xx in {1.13,4.38,7.63,10.88}
 \draw[->,black!70,thick] (\xx,.2)--(\xx+.72,.2);
\node[anchor=east,font=\footnotesize] at (4.4,-2.2) {\StoryColorSmall};
\shade[left color=StoryZero,middle color=StorySmall,right color=white] (4.6,-2.35) rectangle (7.2,-2.05);
\draw[gray,thin] (4.6,-2.35) rectangle (7.2,-2.05);
\node[anchor=west,font=\footnotesize] at (7.4,-2.2) {\StoryColorLarge};
\node[font=\footnotesize] at (6.5,-2.68) {\StoryColorMeaning};
\end{tikzpicture}
\caption{The five-step proof outline. Panel 1 compares the zero state with a generic nonzero homogeneous fixed point. In panels 2--4, darker blue denotes smaller erasure probabilities and white denotes larger ones. The schematic shading in panels 2--3 shows the state after advancement; solid and dashed lines mark regions satisfying the same small bound before and after advancement. The dark navy endpoint of the scale denotes zero erasure probability and is also used for the shortened set $S$, known from the start. Lighter blue denotes small positive erasure probabilities. The pale blue in panel 4 shows the stage when erasure probabilities are small everywhere; the formula below states their subsequent convergence to zero. Panel 5 gives MN capacity achievement, with transmitted rate $R=k_T/n_T$ from~\eqref{eq:mnactualrate} and SPA bit-erasure probability $P_{\mathrm{bit}}=E_{T,t}(\Gamma,\eps)$ from~\eqref{eq:mnratedelivery}. The spatial panels are two-dimensional illustrations of an argument valid in every fixed finite dimension. Small-message contraction is also required.}\label{fig:proofstory}
\end{figure}

\noindent\textbf{1. Positive fixed-point potentials.}
A homogeneous fixed point has the same message vector at every position and is unchanged by density evolution. The complete-decoding point has potential zero. We require every other homogeneous fixed point to have positive potential, together with contraction of sufficiently small messages (potential and contraction assumptions; Definition~\ref{def:admissible}). For regular LDPC codes, positivity follows from the potential threshold in Definition~\ref{def:potential} and its strict-positivity consequence~\eqref{eq:regularpositive}. For MN codes, we separate the complete-decoding point, the other trivial fixed point, and the nontrivial fixed points. The other trivial fixed point $(1,\eps)$ has the capacity-gap potential $1-r/\ell-\eps$ in~\eqref{eq:mnbadpotential}; the degree comparisons in Section~\ref{sec:mnpositivity} establish positivity on the nontrivial fixed-point set $\mathcal F_{\mathrm{nt}}^{\mathrm{MN}}(\eps)$ defined in~\eqref{eq:mnfixedset}.

\noindent\textbf{2. Advancement of a flat boundary.}
Use the step profile~\eqref{eq:halfprofile}: the left half-line starts at the contracting bound $\boldsymbol{\alpha}$ from~\eqref{eq:smallbox}, and the right half-line starts at the upper message vector $\mathbf{b}_{\max}$ from Definition~\ref{def:admissible} (potential and contraction assumptions). Temporarily cap the left half-line at $\boldsymbol{\alpha}$ after each update. This auxiliary sequence decreases. If its limit had a nonzero right endpoint, that endpoint would be a homogeneous fixed point. The endpoint-potential identity~\eqref{eq:translationidentity} of Lemma~\ref{lem:identity} would force its potential to be nonpositive, contradicting Step 1. The auxiliary limit is therefore zero. Once a finite neighborhood is small, the cap becomes inactive. A finite-time comparison with a translated auxiliary profile then proves convergence for the original free trajectory (free half-line convergence, Proposition~\ref{prop:free}; translated comparison~\eqref{eq:freecomparison}). Thus the small-message region advances across a flat boundary under the actual density evolution.

\noindent\textbf{3. Expansion of a ball.}
The hypercube coupling region depends on direction. Sections~\ref{sec:directions} and~\ref{sec:halfspace} use finite-time continuity and a finite cover of the unit sphere to obtain a common advancement time for all directions and sufficiently fine grids. A finite number of updates depends only on a bounded neighborhood. Within this neighborhood, a sufficiently large sphere is close to a plane. The curvature estimate~\eqref{eq:curvature} therefore transfers the flat-boundary comparison to an expanding ball (ball expansion; Lemma~\ref{lem:ballexpansion}). The ball carries a small positive message bound, which is enough for the comparison; its whole interior need not be shortened anew. Remark~\ref{rem:nucleation} relates the sufficient initial radius to the critical-nucleus picture and explains why its dependence on dimension is compatible with the theorem.

\noindent\textbf{4. Decoding throughout the torus.}
Shorten a finite hypercube containing the initial ball. Its actual zero messages are bounded above by the small positive comparison profile. Translations of the ball comparison, equations~\eqref{eq:geomtranslate}--\eqref{eq:geomtranslationinduction}, make the messages small at every position on the torus. The contraction~\eqref{eq:smallbox} then sends all messages to zero. The required coupling width and shortened region are independent of the torus side length (transfer to local shortening; Theorem~\ref{thm:transfer}).

\noindent\textbf{5. Capacity achievement for MN codes.}
As the torus grows, the fraction of shortened positions vanishes. Decoding alone does not determine the transmitted rate after puncturing: Section~\ref{sec:mnrate} also controls the rank of the punctured-variable matrix. Together with the bit-erasure conclusion of Step 4, the rate bounds~\eqref{eq:mnactualrate} and~\eqref{eq:mncapacitysandwich} yield codes whose transmitted rates tend to $r/\ell$ and whose SPA bit-erasure probabilities vanish at every fixed $\eps<1-r/\ell$. This establishes Theorem~\ref{thm:mnmain} (MN decoding and capacity).

\begin{table}[tb]
\centering\small
\caption{The five steps in Figure~\ref{fig:proofstory} and their mathematical justification. Steps 1--4 apply to both code families; Step 5 gives the MN capacity statement. Every spatial comparison retains the two averages in density evolution.}\label{tab:roadmap}
\begin{tabular}{@{}>{\raggedright\arraybackslash}p{.20\linewidth}>{\raggedright\arraybackslash}p{.46\linewidth}>{\raggedright\arraybackslash}p{.28\linewidth}@{}}
\toprule
Object & Claim and role & Reference\\\midrule
1. Fixed-point potentials & Positivity at every nonzero homogeneous fixed point excludes a nonzero endpoint of the auxiliary profile; small messages contract & Definition~\ref{def:admissible} (potential and contraction assumptions); Sections~\ref{sec:regular}, \ref{sec:mnpositivity}; contraction~\eqref{eq:smallbox}\\
2. Flat boundary & The endpoint identity and removal of the temporary cap give free half-line convergence & Lemma~\ref{lem:identity} (endpoint-potential identity); Proposition~\ref{prop:free} (free half-line convergence); free-profile comparison~\eqref{eq:freecomparison}\\
3. Expanding ball & A common time for all directions and a curvature bound transfer the plane comparison to a ball & Sections~\ref{sec:directions}--\ref{sec:ball}; curvature bound~\eqref{eq:curvature}\\
4. Entire torus & Translated ball comparisons make all messages small; contraction gives convergence to zero & Section~\ref{sec:torus}; translation induction~\eqref{eq:geomtranslationinduction}; contraction~\eqref{eq:smallbox}\\
5. MN capacity & Vanishing shortening fraction, projection rank, and bit recovery give transmitted rate $r/\ell$ and vanishing bit erasure & Section~\ref{sec:mnrate}; rate lower bound~\eqref{eq:mnactualrate}; rate limit~\eqref{eq:mncapacitysandwich}\\
\bottomrule
\end{tabular}
\end{table}

\subsection{Relation to previous work}\label{sec:related}
The main contributions are the following three steps.
\begin{enumerate}
\item Lemma~\ref{lem:identity} (endpoint-potential identity) and Proposition~\ref{prop:free} (free half-line convergence) turn fixed-point positivity into convergence of a freely evolving half-line. The endpoint identity applies to bounded monotone vector profiles, and the finite-time comparison removes the temporary constraint on the favorable half-line.
\item Lemmas~\ref{lem:finiteperturbation}--\ref{lem:ballexpansion} (finite-time perturbation through ball expansion) make that progress uniform in direction, transfer it to a discrete grid, and control curvature. This proves decoding from a bounded interior region in every fixed finite dimension.
\item Lemma~\ref{lem:mnpositive} (nontrivial fixed-point positivity) proves MN fixed-point positivity for all $\ell>r\geq2$, $g\geq2$ by elementary degree comparisons. The finite-code and projection-rank arguments in Section~\ref{sec:mnrate} then establish capacity at the actual transmitted rate.
\end{enumerate}

Multidimensional coupling and local hypercube shortening were already introduced in~\cite{MD2013}. Its Proposition~1 reduces strip-shortened density evolution to one dimension, while Section~V derives the local-shortening rate bound and gives numerical two-dimensional thresholds. Our contribution is an analytic guarantee of global decoding from a bounded hypercube for every fixed dimension. In the same $d$-dimensional geometry, a slab of fixed thickness shortens order $V^{1-1/d}$ positions, whereas the hypercube shortens a constant number. Section~\ref{sec:rate} gives the exact fractions and the corresponding check-count rate penalties. A one-dimensional chain with $V$ positions also has an $O(V^{-1})$ termination fraction~\cite{MD2013}. The comparison fixes the channel, degrees, dimension, and widths before increasing $V$; the sufficient widths and sizes may be much larger for the local construction.

The continuum analysis in~\cite[Theorem 2]{Potential2013} fixes the outer boundary to a favorable potential-minimizing state. An interior region on a periodic domain instead has a curved boundary, and the rest of the profile must evolve freely. The first two contributions above address this boundary condition while controlling the discrete recursion. The potential primitives themselves are established tools for scalar and vector recursions~\cite{Maxwell2014,Vector2012}. Scalar one-dimensional propagation, spatial fixed-point integration, and comparisons between continuous and discrete recursions were developed in~\cite{Waves2015}. Here the endpoint argument is formulated for vector messages and is combined with direction-uniform estimates and a curvature bound. Section~\ref{sec:identity} explains the endpoint formulation. Related propagation from favorable regions appears in coupled mean-field models and constraint satisfaction problems~\cite{Hassani2012,Hassani2013}, with the dependence on the initial region studied in~\cite{Caltagirone2014}.

The MN comparison is independent of the finite-degree certificates in~\cite[Theorem 4]{Kasai2026}: the proof of Lemma~\ref{lem:mnpositive} (nontrivial fixed-point positivity) uses no certificate or result from that preprint. Multidimensional circulant constructions address short-cycle design~\cite{Cycles2020,Prob2024}; finite-length scaling~\cite{Olmos2015} and windowed decoding~\cite{Iyengar2013} address different performance and complexity questions. The present geometric theorem supplies a decoding guarantee for local shortening, while Section~\ref{sec:numerics} gives finite-grid density-evolution illustrations.

\section{Preliminaries and a common sufficient condition}\label{sec:prelim}
The first step in Figure~\ref{fig:proofstory} supplies the hypotheses for the spatial steps: positivity at nonzero homogeneous fixed points and contraction near zero. We formulate these conditions for scalar and vector messages together, retaining the check update between the two spatial averages. Theorem~\ref{thm:transfer} (transfer to local shortening) states their common consequence, decoding throughout a torus from a finite shortened region.

\begin{table}[tb]
\centering\small
\caption{Principal notation. The spatial dimension $d$ and message dimension $m$ are distinct.}\label{tab:notation}
\begin{tabular}{@{}p{.29\linewidth}>{\raggedright\arraybackslash}p{.64\linewidth}@{}}
\toprule
Symbol & Meaning and definition\\\midrule
$\mathbf b_{\max},\boldsymbol\alpha$ & Upper message vector and upper corner of the contracting rectangle (potential and contraction assumptions; Definition~\ref{def:admissible})\\
$\mathbf u_{-},\mathbf u_{+}$ & Left and right limits of a monotone profile (endpoint-potential identity; Lemma~\ref{lem:identity})\\
$B_R,\mathcal B,S$ & Euclidean ball (Section~\ref{sec:ball}), lattice ball and shortened set (Section~\ref{sec:torus})\\
$N_{\mathbf e},\mathcal O_{\mathbf e}$ & Advancement time and neighborhood of a direction (\eqref{eq:geomdirectionaltime}, \eqref{eq:geomneighborhood})\\
$p,q$ & Regular-code excess degrees $d_v-1,d_c-1$ (\eqref{eq:scalarfunctions})\\
$a,b;\ \chi_g,\sigma_n$ & MN erasure coordinates; auxiliary branch factor and geometric sum (Section~\ref{sec:mnpositivity})\\
$V,V_S$ & Total and shortened position counts (\eqref{eq:mncounts})\\
$\theta,\alpha_{\min},K$ & Small-message contraction factor and perturbation constants (\eqref{eq:smallbox}, \eqref{eq:geomstep})\\
$\rho,M_0,\Delta_w,\kappa$ & Support radius, projected-density bound, grid error, and distribution-coupling error (\eqref{eq:geomprojections}--\eqref{eq:geomcoupling})\\
$m_{\mathrm{dir}},N_*,t_*,J$ & Number of covering directions, largest selected time, common time, and dependence range (Section~\ref{sec:halfspace}, \eqref{eq:geomcommontime}, \eqref{eq:geomradius})\\
$w_0,R_0,s(w)$ & Sufficient width, scaled initial radius, and shortened side length (\eqref{eq:geomwidth}, \eqref{eq:geomradius}, \eqref{eq:seedside})\\
$\pot$ & Uncoupled regular-code potential threshold (Definition~\ref{def:potential})\\
\bottomrule
\end{tabular}
\end{table}

\begin{definition}[Balanced socket allocation and shortening]\label{def:ensemble}
A socket is an edge endpoint attached to a variable or check node before edges are matched. At each position, partition the sockets of each variable type uniformly among equally sized groups indexed by the offset set $K_w$ in~\eqref{eq:offsets}. Independently partition the corresponding sockets of each check type into incoming offset groups. For each $\mathbf{k}$, match the appropriate groups between positions $\mathbf{i}$ and $\mathbf{i}+\mathbf{k}$ uniformly at random. All socket counts are integral, by choosing the number of variables per position appropriately. Shortening fixes the variables in a designated set to zero and removes them from transmission. In an MN ensemble, puncturing leaves designated variables unobserved by the decoder; it does not fix them.
\end{definition}

The offset allocation is over all sockets of a type at a position; individual variables do not have prescribed offset lists. Let $M$ denote the section-size parameter specified for each ensemble below. At fixed geometry and a finite iteration count, Lemma~\ref{lem:finiteDE} (finite-iteration approximation) proves an $O(M^{-1})$ approximation to density evolution for this allocation, including its sampling without replacement. Sections~\ref{sec:regular} and~\ref{sec:mnmodel} specify the degrees and numbers of variables and checks.

\begin{definition}[Coupled recursion and potential assumptions]\label{def:admissible}
Let $\mathbf{b}_{\max}\in(0,1]^m$. Let $Q:[0,\mathbf{b}_{\max}]\to[0,1]^m$ and $f:[0,1]^m\to[0,\mathbf{b}_{\max}]$ be continuously differentiable, componentwise order-preserving maps, with $Q(0)=f(0)=0$. Write $T=f\circ Q$. Suppose a positive diagonal matrix $D$ and differentiable functions $G,F$ satisfy
\begin{equation}\label{eq:primitives}
 \nabla G(\mathbf{x})=DQ(\mathbf{x}),\qquad \nabla F(\mathbf{y})=Df(\mathbf{y}),\qquad G(0)=F(0)=0.
\end{equation}
The homogeneous potential is
\begin{equation}\label{eq:vectorpotential}
 U(\mathbf{x})=Q(\mathbf{x})^{\mathsf T}D\mathbf{x}-G(\mathbf{x})-F(Q(\mathbf{x})).
\end{equation}
Assume that some $0<\boldsymbol{\alpha}\leq \mathbf{b}_{\max}$ and $0<\theta<1$ satisfy
\begin{equation}\label{eq:smallbox}
 T(t\boldsymbol{\alpha})\leq\theta t\boldsymbol{\alpha}\quad(0\leq t\leq1),
\end{equation}
and that
\begin{equation}\label{eq:positivefixed}
 U(\mathbf{x}_\star)>0\quad\text{whenever }0\ne \mathbf{x}_\star=T(\mathbf{x}_\star),\quad \mathbf{x}_\star\in[0,\mathbf{b}_{\max}].
\end{equation}
All vector inequalities, minima, and intervals are componentwise; $0$ denotes the zero vector when appropriate. Maps and operators retain ordinary type. The norm $\|\cdot\|_\infty$ takes the supremum over message components and, for profiles, spatial positions. This differs from the Euclidean norm $\|\cdot\|_2$ used for spatial distances.
\end{definition}

The positivity and contraction assumptions have different roles. The upper vector $\mathbf b_{\max}$ in Definition~\ref{def:admissible} (potential and contraction assumptions) bounds all allowed erasure messages, whereas $\boldsymbol\alpha$ specifies the smaller rectangle from which contraction is guaranteed. The contraction factor $\theta$ in~\eqref{eq:smallbox} controls repeated updates once every position is in that rectangle. In contrast, fixed-point positivity~\eqref{eq:positivefixed} is used to bring new positions into it. The homogeneous map $T=f\circ Q$ updates a spatially constant message; the spatial update $\F_\lambda$ in Definition~\ref{def:DE} (shortened density evolution) also averages neighboring messages. These two updates agree on constant profiles.

The primitives follow the potential formulation for coupled recursions~\cite{Maxwell2014,Vector2012}. We use these gradient relations at a fixed channel parameter; the sufficient hypotheses for our transfer theorem are those stated in Definition~\ref{def:admissible}. Condition~\eqref{eq:positivefixed} concerns fixed points only. The geometric proof does not require a sign for $U$ at other vectors. Since $f$ maps into $[0,\mathbf{b}_{\max}]$, the upper vector satisfies $T(\mathbf{b}_{\max})\leq \mathbf{b}_{\max}$; equality is unnecessary.

\begin{definition}[Averaging and shortened density evolution]\label{def:DE}
For a symmetric probability measure $\lambda$ on $\R$ or $\R^d$, set
\begin{equation}\label{eq:operator}
 A_\lambda \mathbf{u}(\mathbf{x})=\int \mathbf{u}(\mathbf{x}+\mathbf{z})\,d\lambda(\mathbf{z}),\qquad
 \F_\lambda \mathbf{u}=f\bigl(A_\lambda Q(A_\lambda \mathbf{u})\bigr).
\end{equation}
The discrete distribution on scale $1/w$ is
\begin{equation}\label{eq:discretemeasure}
 \nu_w=\frac1{(2w+1)^d}\sum_{\mathbf{k}\in K_w}\delta_{\mathbf{k}/w}.
\end{equation}
For integer spatial coordinates $\mathbf i\in\Z^d$, define
\begin{equation}\label{eq:latticeoperator}
 A_w\mathbf u(\mathbf i)=\frac{1}{(2w+1)^d}\sum_{\mathbf k\in K_w}\mathbf u(\mathbf i+\mathbf k),
 \qquad \F_w\mathbf u=f\bigl(A_wQ(A_w\mathbf u)\bigr).
\end{equation}
The scaled profile $\widehat{\mathbf u}(\mathbf i/w)=\mathbf u(\mathbf i)$ satisfies
\begin{equation}\label{eq:scalerelation}
 (\F_{\nu_w}\widehat{\mathbf u})(\mathbf i/w)=(\F_w\mathbf u)(\mathbf i).
\end{equation}
On the integer torus $(\Z/L\Z)^d$, $\F_S$ applies $\F_w$ with all indices reduced modulo $L$ and then sets all components on the shortened set $S$ to zero. Equivalently, the scaled torus has side length $L/w$. The initial profile is zero on $S$ and $\mathbf{b}_{\max}$ elsewhere.
\end{definition}

The inner average collects variable messages at a check, and the outer average collects the resulting check messages at a variable. One offset is generally reflected; symmetry of the offset distribution~\eqref{eq:offsets} permits both averages to use the same distribution, as in multidimensional BEC density evolution~\cite[Section III]{MD2013}. Both $\F_\lambda$ and $\F_S$ preserve order. If $\mathbf{u}\leq t\boldsymbol{\alpha}$, averaging and the contraction condition~\eqref{eq:smallbox} give $\F_\lambda \mathbf{u}\leq\theta t\boldsymbol{\alpha}$. A distribution supported in a ball of radius $\rho$ gives a dependence radius of $2\rho$ per update.

\begin{theorem}[Transfer from positive fixed-point potentials]\label{thm:transfer}
Under Definition~\ref{def:admissible} (potential and contraction assumptions), for each fixed $d\geq1$ there exist finite $w_0$ and $R_0$ such that the shortened recursion in Definition~\ref{def:DE} (shortened density evolution), with $w\geq w_0$, $s(w)$ from~\eqref{eq:seedside}, and $L>s(w)+2w$, converges uniformly to zero on the torus.
\end{theorem}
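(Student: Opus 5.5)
The plan follows Steps 2--4 of the outline, using only Definition~\ref{def:admissible} (potential and contraction assumptions). We work with the scaled operator $\F_{\nu_w}$ via~\eqref{eq:scalerelation}, so that the averaging distribution is supported in the cube $[-1,1]^d$ for every $w$.

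\textbf{Flat boundary.} Fix a unit direction $\mathbf e\in S^{d-1}$. We project $\nu_w$ onto $\mathbf e$, obtaining a symmetric one-dimensional measure. For the half-line step profile that equals $\boldsymbol\alpha$ on $\{x<0\}$ and $\mathbf b_{\max}$ on $\{x\ge0\}$, we run the capped recursion: apply $\F$, then take the minimum with $\boldsymbol\alpha$ on the left half-line. Monotonicity makes this sequence nonincreasing, both in time and in space. Its limit is therefore a monotone fixed profile of the capped map, with endpoints $\mathbf u_-\le\boldsymbol\alpha$ and $\mathbf u_+$. By~\eqref{eq:smallbox}, $\mathbf u_-=0$, and $\mathbf u_+$ is a homogeneous fixed point. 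We integrate the fixed-point equation against the discrete derivative of the profile, using the primitives~\eqref{eq:primitives} and symmetry of the measure. This is the endpoint-potential identity, Lemma~\ref{lem:identity}, which gives $U(\mathbf u_+)-U(\mathbf u_-)\le 0$ up to a cap term of favorable sign. Hence $U(\mathbf u_+)\le0$, and~\eqref{eq:positivefixed} forces $\mathbf u_+=0$.

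So the capped profile tends to zero locally uniformly. At some finite time it lies below $t\boldsymbol\alpha$ on a long window. We then remove the cap: the free trajectory started from the step profile translated left by a fixed amount lies below the capped one. This is Proposition~\ref{prop:free} with comparison~\eqref{eq:freecomparison}. The conclusion is that in a finite time $N_{\mathbf e}$ the region where the message is below $\tfrac12\boldsymbol\alpha$ advances by a fixed distance $\delta>0$.

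\textbf{Uniformity and curvature.} Finitely many iterations depend only on a bounded neighbourhood and are continuous in the averaging measure. Consequently the advancement at time $N_{\mathbf e}$, obtained with a small slack, persists for nearby directions and for all sufficiently fine grids ($w\ge w_0$). Compactness of $S^{d-1}$ then yields a finite cover and a common time $t_*$. This is the role of Lemmas~\ref{lem:finiteperturbation}--\ref{lem:ballexpansion}.

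Within the dependence range $J=2t_*$ (in scaled units), a sphere of radius $R\ge R_0$ deviates from its tangent plane by $O(J^2/R)$. Choosing $R_0$ large makes this deviation smaller than $\delta/2$. Starting from the profile that is at most $\tfrac12\boldsymbol\alpha$ on $B_R$ and at most $\mathbf b_{\max}$ elsewhere, we compare at each boundary point with the half-space profile tangent there. After $t_*$ steps the bound holds on $B_{R+\delta/2}$. Iterating this gives expansion by $\delta/2$ per $t_*$ steps, which is the curvature estimate~\eqref{eq:curvature}.

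\textbf{Torus.} We choose $s(w)$ so that the shortened cube contains the lattice ball $wB_{R_0}$. Its zero initial values lie below the comparison profile, and shortening only lowers messages, so order preservation of $\F_S$ lets the comparison pass to the torus. On the torus we use the periodic lift. As long as the ball's dependence region stays within one period, the comparison is valid. Before the ball wraps, we restart the argument with translated balls centred at points already known to be small; these are the translation steps~\eqref{eq:geomtranslate}--\eqref{eq:geomtranslationinduction}. Finitely many restarts cover $(\Z/L\Z)^d$, with the number depending on $L/w$. At that point $\mathbf u\le\tfrac12\boldsymbol\alpha$ everywhere, and~\eqref{eq:smallbox} gives $\|\F_S^n\mathbf u\|_\infty\le\theta^n$, which is uniform convergence to zero. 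The condition $L>s(w)+2w$ ensures that the shortened cube plus one coupling range fits in the torus.

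\textbf{Main obstacle.} I expect the removal of the cap and the uniformity in direction to be the hardest part. The cap term in the identity must have the right sign, and the free-profile comparison must give strict advancement with slack, because the continuity and cover argument needs a strict inequality to survive grid and direction perturbations.
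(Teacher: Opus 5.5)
Your architecture (capped half-line, endpoint identity, cap removal, directional cover, curvature, translated balls, contraction) is the paper's. The gap is that you run the flat-boundary step for the wrong measure, and that step is where the $w$-independence of $R_0$ has to come from.

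You project the atomic measure $\nu_w$ onto $\mathbf e$ and then invoke Lemma~\ref{lem:identity}. That identity is exact only because $\omega$ has a bounded density. Convolution then makes $\mathbf z=A\mathbf u$ Lipschitz even though $\mathbf u$ jumps, so the chain rule for $H$ in~\eqref{eq:profileenergy} is legitimate. For an atomic projection, $A\mathbf u$ inherits jumps of the size of the atom masses. Summing against a discrete derivative then leaves second-order remainders (of $G$, $F$ and the product term) at every jump, so you obtain at best $U(\mathbf u_+)\leq U(\mathbf u_-)+O(1/w)$. That part could be patched: by~\eqref{eq:smallbox}, nonzero fixed points stay away from $0$, so $U$ has a positive minimum on them.

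The serious problem is uniformity. Each $w$ would yield its own advancement time $N_{\mathbf e,w}$, with nothing bounding it uniformly in $w$. Yet $t_*$, and hence $J$ and $R_0$, must be the same for every $w\geq w_0$ for $s(w)=2\lceil wR_0\rceil+1$ to be meaningful. ``Continuity in the averaging measure'' does not rescue this. With the discontinuous initial step $\mathbf h$, finitely many updates are sup-norm continuous only around a reference measure without atoms. Between two atomic measures, a single atom crossing the jump changes the first average by its full mass, however small the displacement.

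The missing ingredient is the continuum reference. Run Step 2 for $\omega_{\mathbf e}=\mathcal L(\mathbf e\cdot\mathbf Z)$ with $\mathbf Z$ uniform on $[-1,1]^d$. It has an even density on $[-\sqrt d,\sqrt d]$ bounded by $M_0=\sqrt d/2$. Lemma~\ref{lem:identity} and Proposition~\ref{prop:free} then apply exactly and give a $w$-free time $N_{\mathbf e}$ with slack, namely $\F_{\omega_{\mathbf e}}^{N_{\mathbf e}}\mathbf h\leq\boldsymbol\alpha/4$ on $s\leq2$. The discrete projection $\omega_{w,\mathbf e}$ enters only at that fixed finite time, through the coupling~\eqref{eq:geomcoupling}. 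The $M_0$-Lipschitz reference distribution function bounds the first-update error by $2KM_0\kappa$, and Lipschitz propagation gives Lemma~\ref{lem:finiteperturbation}. The finite cover and the width choice~\eqref{eq:geomwidth} then make $t_*$ and $R_0$ independent of $w$.

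Three smaller corrections:
\begin{itemize}
\item On the torus you need no wrap-around restriction or restarts. The periodic lift of the shortened profile stays below the unshortened evolution of a single non-periodic comparison profile on $\Z^d$ for all times, since extra periodic zeros only lower it. Your restriction would actually block the argument for $L$ just above $s(w)+2w$, where $2(R_0+J)>L/w$ already at the start.
\item The cap-removal domination runs the other way, as in~\eqref{eq:freecomparison}: a left translate of the capped trajectory dominates the free one. The free trajectory from a left-translated step is at least $\F^n\mathbf h\geq\mathbf v_n$, so it cannot be the dominated one.
\item The dependence range is $2\sqrt d\,t_*$, not $2t_*$, because $\nu$ lives on a cube.
\end{itemize}
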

Sections~\ref{sec:core} and~\ref{sec:geometry} prove the theorem. The constants can be expressed through finitely many continuum half-line advancement times; Section~\ref{sec:quantitative} gives sufficient choices.

\section{A potential identity and free half-line convergence}\label{sec:core}
The passage from fixed-point potentials to flat-boundary advancement is Step 2 in Figure~\ref{fig:proofstory}. An exact identity excludes a nonzero endpoint of an auxiliary capped profile. Removing the cap by a finite-time comparison proves free half-line convergence, which supplies the planar comparison used for ball expansion in Section~\ref{sec:ball}.

\subsection{The endpoint identity}\label{sec:identity}
The potential and translation method is inherited from~\cite{Maxwell2014,Vector2012}; in particular, \cite[Lemma 24]{Maxwell2014} bounds a finite discrete shift of a coupled scalar potential. Spatial fixed-point integration also relates potential values to averaged scalar profiles in~\cite{Waves2015}. The form below is an exact whole-line endpoint identity for bounded monotone vector profiles. It allows jumps in the profile and nonzero limits at infinity, so the profile itself need not be integrable. Only derivatives of its averaged profiles enter the integrable expression. This formulation applies to the auxiliary inequality $\mathbf u\leq\F\mathbf u$ and supplies the sign needed for cap removal, without assuming that $\mathbf u$ is a stationary traveling wave.

In this section, $\omega$ is an even bounded probability density supported in $[-\rho,\rho]$. Write $A=A_\omega$ and $\F=\F_\omega$.

\begin{lemma}[Exact endpoint-potential identity]\label{lem:identity}
Let $\mathbf{u}:\R\to[0,\mathbf{b}_{\max}]$ be componentwise nondecreasing, with endpoints $\mathbf{u}_{-}$ and $\mathbf{u}_{+}$. Set $\mathbf{z}=A\mathbf{u}$, $\mathbf{v}=Q(\mathbf{z})$, and $\mathbf{y}=A\mathbf{v}$. Then
\begin{equation}\label{eq:translationidentity}
 U(\mathbf{u}_{+})-U(\mathbf{u}_{-})=\int_{\R}\mathbf{y}'(s)^{\mathsf T}D\,[\mathbf{u}(s)-\F \mathbf{u}(s)]\,ds.
\end{equation}
In particular, $\mathbf{u}\leq\F \mathbf{u}$ implies $U(\mathbf{u}_{+})\leq U(\mathbf{u}_{-})$.
\end{lemma}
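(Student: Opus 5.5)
The plan is to show that the right-hand side of~\eqref{eq:translationidentity} is an exact derivative integrated over $\R$. Its primitive is the ``spatial potential'' $Q(\mathbf z)^{\mathsf T}D\mathbf z-G(\mathbf z)-F(\mathbf y)$, evaluated between $-\infty$ and $+\infty$. Because $\mathbf u$ need not be integrable, the symmetry of $A$ will only be used against integrable derivatives, never against $\mathbf u$ itself.

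\textbf{Regularity.} Since $\omega$ is a bounded even density supported in $[-\rho,\rho]$, we have $\mathbf z(s)=\int\mathbf u(r)\omega(r-s)\,dr$. We first prove the following.
\begin{itemize}
\item $\mathbf z$ is Lipschitz, so it is absolutely continuous. This follows from $|\mathbf z(s)-\mathbf z(s')|\le \|\mathbf u\|_\infty\int|\omega(r-s)-\omega(r-s')|\,dr$ together with continuity of translation in $L^1$, or more directly from monotonicity and boundedness of $\mathbf u$ and $\omega$.
\item $\mathbf z$ is nondecreasing with $\mathbf z(\pm\infty)=\mathbf u_\pm$, so $\mathbf z'\ge0$ and $\int\mathbf z'=\mathbf u_+-\mathbf u_-$.
\item Since $Q$ is $C^1$ and order preserving, $\mathbf v=Q(\mathbf z)$ is absolutely continuous and nondecreasing, with $\mathbf v'=Q'(\mathbf z)\mathbf z'\ge0$ integrable and $\mathbf v(\pm\infty)=Q(\mathbf u_\pm)$.
\item Likewise $\mathbf y=A\mathbf v$ satisfies $\mathbf y'=A\mathbf v'\ge0$, which is integrable, and $\mathbf y(\pm\infty)=Q(\mathbf u_\pm)$.
\end{itemize}
In particular every integrand below is absolutely integrable, since it is an $L^1$ function times a bounded one.

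\textbf{Splitting and the transposition step.} Write $\F\mathbf u=f(\mathbf y)$ and split the integral into $\int\mathbf y'^{\mathsf T}D\mathbf u$ minus $\int\mathbf y'^{\mathsf T}Df(\mathbf y)$.

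For the second term, $\nabla F=Df$ gives $\frac{d}{ds}F(\mathbf y(s))=\mathbf y'^{\mathsf T}Df(\mathbf y)$, by the chain rule for an absolutely continuous $\mathbf y$ and a $C^1$ function $F$. Hence this term equals $F(Q(\mathbf u_+))-F(Q(\mathbf u_-))$.

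For the first term, use Fubini with the evenness of $\omega$ and the fact that $D$ is diagonal:
\[
\int(A\mathbf v')^{\mathsf T}D\mathbf u=\int\mathbf v'^{\mathsf T}D(A\mathbf u)=\int\mathbf v'^{\mathsf T}D\mathbf z .
\]
Fubini is justified because $\mathbf v'\in L^1$, $\mathbf u$ is bounded and $\omega\in L^1$.

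Integrating by parts on $[-S,S]$ gives
\[
\int_{-S}^{S}\mathbf v'^{\mathsf T}D\mathbf z=\bigl[\mathbf v^{\mathsf T}D\mathbf z\bigr]_{-S}^{S}-\int_{-S}^{S}Q(\mathbf z)^{\mathsf T}D\mathbf z' .
\]
Since $\nabla G=DQ$, the last integral is $[G(\mathbf z)]_{-S}^{S}$. Letting $S\to\infty$, with all boundary values converging by the endpoint limits above, the first term equals
\[
\bigl[Q(\mathbf u)^{\mathsf T}D\mathbf u-G(\mathbf u)\bigr]_{\mathbf u_-}^{\mathbf u_+}.
\]
Subtracting the $F$ term yields exactly $U(\mathbf u_+)-U(\mathbf u_-)$ by~\eqref{eq:vectorpotential}.

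\textbf{Consequence.} If $\mathbf u\le\F\mathbf u$, then $\mathbf y'\ge0$ componentwise and $D$ is positive diagonal, so the integrand is pointwise nonpositive. Therefore $U(\mathbf u_+)\le U(\mathbf u_-)$.

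\textbf{Expected obstacle.} The delicate step is the transposition of $A$ across the nonintegrable $\mathbf u$ and the accompanying boundary limits. The plan handles this by placing $A$ on $\mathbf v'\in L^1$ and truncating to $[-S,S]$ only after the transposition. An alternative is to write $\mathbf u=\mathbf u_-+(\mathbf u-\mathbf u_-)$: the constant part then contributes $\int\mathbf y'^{\mathsf T}D\mathbf u_-$, which is finite and explicit, so only a one-sided integrable remainder must be transposed.
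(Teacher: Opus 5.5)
Your proof is correct and is essentially the paper's argument: the paper differentiates the profile energy $H=\mathbf v^{\mathsf T}D\mathbf z-G(\mathbf z)-F(\mathbf y)$ and moves $A$ onto the integrable derivative $\mathbf v'$ by evenness and Fubini, which is your split, transposition, and integration by parts on $[-S,S]$ packaged as a single derivative. One small point: the Lipschitz property of $\mathbf z$ (needed for absolute continuity, hence for the chain rule and integration by parts) must come from your second justification, monotonicity of $\mathbf u$ together with the bound on $\omega$, because continuity of translation in $L^1$ alone gives only uniform continuity.
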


In the endpoint identity~\eqref{eq:translationidentity}, $\mathbf y=AQ(A\mathbf u)$ is the averaged check-message profile defined in Lemma~\ref{lem:identity} (endpoint-potential identity). Its derivative is componentwise nonnegative because averaging and the check map preserve spatial order. Thus the integral compares the update residual $\mathbf u-\F\mathbf u$ with a nonnegative weight. A nonpositive residual forces a nonpositive potential difference between the right and left endpoints. This is the precise connection between the spatial profile and the homogeneous potential~\eqref{eq:vectorpotential}.
\begin{proof}
Each component of the distributional derivative $d\mathbf{u}$ is a finite nonnegative measure. Convolution with the bounded density makes $\mathbf{z}$ Lipschitz, with $\mathbf{z}'\geq0$ and $\mathbf{z}'\in L^1(\R)$. The order and differentiability of $Q$ imply the same properties for $\mathbf{v}$; consequently $\mathbf{y}'=A\mathbf{v}'$ is nonnegative and integrable. Thus the calculation differentiates averaged profiles and allows jumps in $\mathbf{u}$.

Consider
\begin{equation}\label{eq:profileenergy}
 H(s)=\mathbf{v}(s)^{\mathsf T}D\mathbf{z}(s)-G(\mathbf{z}(s))-F(\mathbf{y}(s)).
\end{equation}
The gradient identity for $G$ in the primitive relations~\eqref{eq:primitives} cancels the terms containing $\mathbf{z}'$, while the gradient identity for $F$ gives
\begin{equation}\label{eq:energyderivative}
 H'(s)=\mathbf{v}'(s)^{\mathsf T}D\mathbf{z}(s)-\mathbf{y}'(s)^{\mathsf T}Df(\mathbf{y}(s)).
\end{equation}
Evenness of $\omega$ and Fubini's theorem yield
\begin{equation}\label{eq:selfadjoint}
 \int_{\R}\mathbf{v}'^{\mathsf T}DA\mathbf{u}\,ds
 =\int_{\R}(A\mathbf{v}')^{\mathsf T}D\mathbf{u}\,ds
 =\int_{\R}\mathbf{y}'^{\mathsf T}D\mathbf{u}\,ds.
\end{equation}
These integrals are absolutely convergent because $\mathbf{u}$ is bounded and $\mathbf{v}'\in L^1$; $\mathbf{u}$ itself need not be integrable. Averaging preserves endpoint limits, so $H(-\infty)=U(\mathbf{u}_{-})$ and $H(+\infty)=U(\mathbf{u}_{+})$. Integrating the profile-energy derivative~\eqref{eq:energyderivative} and moving the averaging operator by the symmetry identity~\eqref{eq:selfadjoint} proves the endpoint-potential identity~\eqref{eq:translationidentity}. The sign follows from $\mathbf{y}'\geq0$ and the positive diagonal entries of $D$.
\end{proof}

\subsection{Removing the auxiliary cap}\label{sec:free}
The endpoint identity rules out a nonzero limit of the constrained iteration. To obtain the flat-boundary advancement required in Figure~\ref{fig:proofstory}, the remaining task is to compare that iteration with the actual unconstrained density evolution.

\begin{proposition}[Free half-line convergence]\label{prop:free}
Define the step profile
\begin{equation}\label{eq:halfprofile}
 \mathbf{h}(s)=
 \begin{cases}\boldsymbol{\alpha},&s\leq0,\\\mathbf{b}_{\max},&s>0.\end{cases}
\end{equation}
Under Definition~\ref{def:admissible} (potential and contraction assumptions), $\F_\omega^n \mathbf{h}(s)\to0$ for every finite $s$. In particular, for any finite $s_0$, some finite $N$ satisfies $\F_\omega^N \mathbf{h}(s)\leq\boldsymbol{\alpha}/4$ for all $s\leq s_0$.
\end{proposition}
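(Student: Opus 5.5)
We first run an auxiliary \emph{capped} iteration and show it tends to zero using Lemma~\ref{lem:identity}. Then we remove the cap by comparing with a translated copy.

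\textbf{Capped sequence.} Define $\mathcal C\mathbf u(s)=\min\{\F\mathbf u(s),\boldsymbol\alpha\}$ for $s\leq0$ and $\mathcal C\mathbf u(s)=\F\mathbf u(s)$ for $s>0$, and set $\mathbf g_n=\mathcal C^n\mathbf h$. Both $\F$ and $\mathcal C$ preserve order and spatial monotonicity, because averaging and $Q,f$ are order preserving. Since $f$ maps into $[0,\mathbf b_{\max}]$, we have $\F\mathbf h\leq\mathbf b_{\max}$, and the cap gives $\mathcal C\mathbf h\leq\mathbf h$. Induction then shows that $\mathbf g_n$ is nonincreasing in $n$ and nondecreasing in $s$.

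\textbf{Limit of the capped sequence.} Let $\mathbf g=\lim_n\mathbf g_n$ pointwise. Dominated convergence passes the limit through the averages, so $\mathbf g=\mathcal C\mathbf g$, and in particular $\mathbf g\leq\F\mathbf g$. Denote the endpoints by $\mathbf g_\pm$. Taking limits in $\mathbf g=\mathcal C\mathbf g$ as $s\to+\infty$ gives $\mathbf g_+=T(\mathbf g_+)$, with $\mathbf g_+\in[0,\mathbf b_{\max}]$. As $s\to-\infty$ we get $\mathbf g_-=\min\{T(\mathbf g_-),\boldsymbol\alpha\}$ with $\mathbf g_-\leq\boldsymbol\alpha$. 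Choose the smallest $t\in[0,1]$ with $\mathbf g_-\leq t\boldsymbol\alpha$. If $t>0$, contraction~\eqref{eq:smallbox} gives $T(\mathbf g_-)\leq\theta t\boldsymbol\alpha$, and then $\mathbf g_-\leq\theta t\boldsymbol\alpha$, which contradicts the minimality of $t$. Hence $\mathbf g_-=0$ and $U(\mathbf g_-)=0$.

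Lemma~\ref{lem:identity} now gives $U(\mathbf g_+)\leq0$. Condition~\eqref{eq:positivefixed} then forces $\mathbf g_+=0$. Monotonicity in $s$ gives $\mathbf g\equiv0$. Since $\mathbf g_n$ is monotone in $s$ and decreases to a continuous limit, Dini's theorem on each half-line $(-\infty,s_1]$ shows that $\sup_{s\leq s_1}\|\mathbf g_n(s)\|_\infty\to0$. The only step needing care here is justifying the endpoint limits through $f\circ A\circ Q\circ A$; boundedness and continuity suffice.

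\textbf{Removing the cap.} This is the main obstacle. Fix a shift $\tau>0$ to be chosen, and pick $n_0$ with $\mathbf g_{n_0}(s)\leq\boldsymbol\alpha$ for all $s\leq\tau+2\rho$. Chosen more strongly, this bound holds on a region where the cap is inactive at every later step. Indeed, once $\mathbf g_n\leq\boldsymbol\alpha$ on $(-\infty,\tau+2\rho]$, monotonicity and~\eqref{eq:smallbox} give $\F\mathbf g_n\leq\boldsymbol\alpha$ on $(-\infty,\tau]$, so the cap never bites there. Smallness propagates to the left of $\tau+2\rho$, and the capped iteration stays monotone decreasing in $n$.

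Next compare the free iterates with a translate of $\mathbf g_{n_0}$. We have $\F^{n}\mathbf h\geq\mathcal C^n\mathbf h$, which is the wrong direction, so we bound the free trajectory from above instead. Since $\F^{n_0}\mathbf h\leq\mathbf b_{\max}$ everywhere, and $\F^{n_0}\mathbf h\leq\boldsymbol\alpha$ on $s\leq 0$ by induction from~\eqref{eq:smallbox} and spatial monotonicity, we get $\F^{n_0}\mathbf h\leq\mathbf h$. Iterating gives $\F^{kn_0}\mathbf h\leq\F^{(k-1)n_0}\mathbf h$.

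The plan is then a comparison of the form
\[
\F^{n}\mathbf h(\cdot)\leq\mathbf g_{n}(\cdot-\tau)
\]
on the region where the cap has become inactive. To establish it, let $\mathbf k=\mathbf g_{n_0}(\cdot-\tau)$. Because $\mathbf g_{n_0}\leq\boldsymbol\alpha$ on $(-\infty,\tau+2\rho]$, we get $\mathbf k\leq\boldsymbol\alpha$ on $(-\infty,2\rho+2\tau]$, and $\mathbf k\leq\mathbf b_{\max}$ always, so $\mathbf h\leq\mathbf k$ fails only if $\mathbf k$ exceeds $\boldsymbol\alpha$ on $s\leq0$, which it does not. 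Hence $\mathbf h\leq\mathbf k$ and $\F^n\mathbf h\leq\F^n\mathbf k$. Moreover $\F^n\mathbf k=\mathcal C^{n}\mathbf g_{n_0}(\cdot-\tau)$ as long as the cap is inactive on the relevant translated region, which holds by the propagation above with $\tau\geq0$. Therefore $\F^n\mathbf h\leq\mathbf g_{n_0+n}(\cdot-\tau)\to0$ locally uniformly.

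Given $s_0$, apply the locally uniform convergence from the Dini step on $(-\infty,s_0-\tau]$ and choose $N$ so that the bound is at most $\boldsymbol\alpha/4$ there. This yields the final statement.
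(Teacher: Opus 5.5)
\textbf{Where the argument breaks.} Your treatment of the capped sequence is sound and matches the paper. Your $\mathcal C$ is the paper's cap $\min\{\mathbf h,\F\,\cdot\}$, and the limit satisfies $\mathbf g\le\F\mathbf g$. Your minimal-$t$ argument for $\mathbf g_-=0$ is a valid alternative to the paper's left-plateau argument. The observation that the cap becomes inactive from some $n_0$ on is also the paper's step. Dini's theorem does not apply on a noncompact half-line, but you do not need it: spatial monotonicity gives $\sup_{s\le s_1}\mathbf g_n(s)\le\mathbf g_n(s_1)$. The genuine gap is in removing the cap.

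\textbf{The comparison runs the wrong way.} The inequality $\mathbf h\le\mathbf k=\mathbf g_{n_0}(\cdot-\tau)$ is false for every shift $\tau$ and every $n_0\ge1$. By finite dependence, $\mathbf k(s)=T^{n_0}(\boldsymbol\alpha)\le\theta^{n_0}\boldsymbol\alpha<\boldsymbol\alpha=\mathbf h(s)$ for all sufficiently negative $s$. On $(0,2\tau+2\rho]$ you have $\mathbf k\le\boldsymbol\alpha$, whereas $\mathbf h=\mathbf b_{\max}$ there. Your justification lists only upper bounds on $\mathbf k$, which cannot give $\mathbf k\ge\mathbf h$, and shifting to the right only makes $\mathbf k$ smaller. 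The auxiliary claim $\F^{n_0}\mathbf h\le\boldsymbol\alpha$ on $s\le0$ is also unsupported. At $s=0$ the inner average is already $A\mathbf h(0)=(\boldsymbol\alpha+\mathbf b_{\max})/2$, so \eqref{eq:smallbox} gives no control near the interface. The free trajectory need not decrease, and such a bound is essentially the statement you are trying to prove. The root problem is that you compare the initial free profile with a capped profile at a later time. Capped profiles have strictly smaller left plateaus, so no translate of them can dominate $\mathbf h$.

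\textbf{The missing idea.} Compare the free and capped trajectories at the \emph{same} time $n_0$, and shift the capped profile to the \emph{left}. The four facts you need are:
\begin{itemize}
\item Finite dependence gives $\F^{n_0}\mathbf h(s)=T^{n_0}(\boldsymbol\alpha)$ for $s\le-2\rho n_0$.
\item Order preservation gives $\F^{n_0}\mathbf h\le T^{n_0}(\mathbf b_{\max})$ everywhere.
\item Induction, using $\mathbf h\ge\boldsymbol\alpha\ge T^{n}(\boldsymbol\alpha)$, gives $\mathbf g_{n_0}\ge T^{n_0}(\boldsymbol\alpha)$ everywhere.
\item $\mathbf g_{n_0}(s)=T^{n_0}(\mathbf b_{\max})$ for $s>2\rho n_0$, since the cap never acts there.
\end{itemize}
Together these give $\F^{n_0}\mathbf h(s)\le\mathbf g_{n_0}(s+4\rho n_0)$ for all $s$. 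The cap is inactive from $n_0$ on, so order preservation and translation invariance yield $\F^{n_0+k}\mathbf h(s)\le\mathbf g_{n_0+k}(s+4\rho n_0)\to0$. This is the paper's comparison \eqref{eq:freecomparison}, and spatial monotonicity then gives the bound on $s\le s_0$.
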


Convergence of the capped sequence alone would be insufficient: lowering messages by hand makes decoding easier. In the capped recursion~\eqref{eq:capiteration}, $\mathbf v_n$ denotes that auxiliary sequence, while $\F^n\mathbf h$ denotes the free sequence from the step profile~\eqref{eq:halfprofile}. The proof first shows that the cap eventually has no effect on $\mathbf v_n$. It then bounds the free sequence from above by a spatial translate of the auxiliary sequence in the comparison~\eqref{eq:freecomparison}. The direction of this last inequality is what transfers convergence to the actual free update.
\begin{proof}
Set
\begin{equation}\label{eq:capiteration}
 \mathbf{b}^{-}_n=T^n(\boldsymbol{\alpha}),\qquad \mathbf{b}^{+}_n=T^n(\mathbf{b}_{\max}),\qquad
 \mathbf{v}_0=\mathbf{h},\qquad \mathbf{v}_{n+1}=\min\{\mathbf{h},\F \mathbf{v}_n\}.
\end{equation}
Order preservation shows that $\mathbf{b}^{-}_n$ and $\mathbf{b}^{+}_n$ decrease, that $\mathbf{v}_n$ decreases in $n$ and is nondecreasing in $s$, and that
\begin{equation}\label{eq:capbounds}
 \mathbf{b}^{-}_n\leq \mathbf{v}_n\leq \mathbf{b}^{+}_n,\qquad \mathbf{b}^{-}_n\leq\theta^n\boldsymbol{\alpha}.
\end{equation}
The lower bound follows by induction from $\mathbf{h}\geq\boldsymbol{\alpha}\geq \mathbf{b}^{-}_{n+1}$; the upper bound follows from $\F \mathbf{v}_n\leq T(\mathbf{b}^{+}_n)=\mathbf{b}^{+}_{n+1}$.

Let $\mathbf{v}=\lim_n \mathbf{v}_n$. Dominated convergence gives $\mathbf{v}=\min\{\mathbf{h},\F \mathbf{v}\}$, hence $\mathbf{v}\leq\F \mathbf{v}$, with equality for $s>0$. Finite dependence gives $\mathbf{v}_n(s)=\mathbf{b}^{-}_n$ for $s\leq-2\rho n$. To identify the left endpoint, first fix $n$ and use $0\leq\mathbf v\leq\mathbf v_n$. Letting $s\to-\infty$ bounds that endpoint by $\mathbf b_n^{-}$; then $n\to\infty$ and the homogeneous contraction bound~\eqref{eq:capbounds} give zero. This argument does not exchange the spatial and iteration limits. Its right endpoint $\mathbf{u}_{+}$ satisfies $\mathbf{u}_{+}=T(\mathbf{u}_{+})$, because the cap is inactive on the right: let $s\to+\infty$ in $\mathbf v(s)=\F\mathbf v(s)$ and use dominated convergence in both averages and continuity of $f,Q$. If $\mathbf{u}_{+}\ne0$, Lemma~\ref{lem:identity} (endpoint-potential identity) contradicts~\eqref{eq:positivefixed}:
\begin{equation}\label{eq:capcontradiction}
 0<U(\mathbf{u}_{+})\leq U(0)=0.
\end{equation}
Thus $\mathbf{u}_{+}=0$, and monotonicity in $s$ gives $\mathbf{v}\equiv0$.

Choose $N$ with $\mathbf{v}_N(2\rho)\leq\boldsymbol{\alpha}$. For $s\leq0$, every input used in $\F \mathbf{v}_N(s)$ lies at a position at most $2\rho$, so $\F \mathbf{v}_N(s)\leq T(\boldsymbol{\alpha})\leq\boldsymbol{\alpha}$. For $s>0$, $\F \mathbf{v}_N(s)\leq \mathbf{b}_{\max}$. Hence the cap is inactive: $\F \mathbf{v}_N=\mathbf{v}_{N+1}\leq \mathbf{v}_N$. It remains inactive afterward by order preservation, and
\begin{equation}\label{eq:capinactive}
 \F^k \mathbf{v}_N=\mathbf{v}_{N+k}\longrightarrow0.
\end{equation}

It remains to compare with the free trajectory from $\mathbf{h}$. Finite dependence and~\eqref{eq:capbounds} give the exact plateaus
\begin{align}
 \F^N \mathbf{h}(s)&=\mathbf{b}^{-}_N\quad(s\leq-2\rho N),&
 \F^N \mathbf{h}&\leq \mathbf{b}^{+}_N,\label{eq:freeplateaus}\\
 \mathbf{v}_N(s)&=\mathbf{b}^{+}_N\quad(s>2\rho N),&
 \mathbf{v}_N&\geq \mathbf{b}^{-}_N.\label{eq:capplateaus}
\end{align}
For the right plateau, all inputs in the preceding updates lie on the right, and $\mathbf{b}^{+}_n\leq \mathbf{b}_{\max}$, so the cap has no effect there. If $s\leq-2\rho N$, use the left plateau in~\eqref{eq:freeplateaus}; otherwise $s+4\rho N>2\rho N$, and use the right plateau in~\eqref{eq:capplateaus}. These two cases prove
\begin{equation}\label{eq:freecomparison}
 \F^{N+k}\mathbf{h}(s)\leq \mathbf{v}_{N+k}(s+4\rho N)\longrightarrow0.
\end{equation}
Spatial monotonicity then turns convergence at $s=s_0$ into the stated bound for every $s\leq s_0$.
\end{proof}

The right plateau $\mathbf{b}^{+}_N=T^N(\mathbf{b}_{\max})$ in~\eqref{eq:capplateaus} accounts for the homogeneous evolution of the upper message bound. Keeping this value in the comparison also covers regular LDPC initial conditions for which $T(\mathbf{b}_{\max})<\mathbf{b}_{\max}$.

\section{From free half-spaces to local shortening}\label{sec:geometry}
Flat-boundary advancement now yields ball expansion and decoding throughout the torus, Steps 3--4 in Figure~\ref{fig:proofstory}. The planar comparison from Proposition~\ref{prop:free} (free half-line convergence) must first hold at a common time for every direction on a discrete grid. A curvature estimate then gives ball expansion, and translated comparisons make messages small everywhere. Contraction completes the proof of Theorem~\ref{thm:transfer} (transfer to local shortening). All vector inequalities below are componentwise.
\subsection{Directions and discrete averaging}\label{sec:directions}
Ball expansion needs planar progress in every outward normal direction. Uniform finite-time control under changes of direction and grid spacing supplies the passage from Step 2 to Step 3 in Figure~\ref{fig:proofstory}. Comparisons between continuous and discrete scalar recursions are used in~\cite{Waves2015}. The estimate below also controls a change in spatial direction and applies to vector messages.

Let $\nu$ be uniform on $[-1,1]^d$, and define the discrete distribution and its projections by

\begin{equation}\label{eq:geomprojections}
 \rho=\sqrt d,\quad
 \nu_w=\frac1{(2w+1)^d}\sum_{\mathbf{k}\in\{-w,\ldots,w\}^d}\delta_{\mathbf{k}/w},\quad
 \omega_{\mathbf{e}}=\mathcal L(\mathbf{e}\cdot \mathbf{Z}),\quad
 \omega_{w,\mathbf{e}}=\mathcal L(\mathbf{e}\cdot \mathbf{Z}_w),
\end{equation}

where $\mathbf{Z}\sim\nu$, $\mathbf{Z}_w\sim\nu_w$, and $\mathbf{e}\in S^{d-1}$. Here $\mathcal L$ denotes distribution. Every $\omega_{\mathbf{e}}$ has an even density supported in $[-\rho,\rho]$ and bounded by $M_0:=\rho/2$. Indeed, some coordinate satisfies $|e_i|\geq1/\rho$. The law of $e_i Z_i$ is uniform with density at most $\rho/2$, and convolution with the remaining independent coordinates cannot increase that bound. Thus Proposition~\ref{prop:free} (free half-line convergence) applies in every direction.

Divide each coordinate interval into $2w+1$ equal parts and map part $k$ to $k/w$. Its midpoint differs from $k/w$ by at most $1/(2w+1)$; adding its half-length gives a coordinate error at most $2/(2w+1)$. Hence we can couple the projections so that

\begin{equation}\label{eq:geomcoupling}
 |\mathbf{e}\cdot \mathbf{Z}_w-\mathbf{e}_0\cdot \mathbf{Z}|
 \leq\kappa:=\Delta_w+\rho\norm{\mathbf{e}-\mathbf{e}_0}_2,
 \qquad \Delta_w=\frac{2\rho}{2w+1}
\end{equation}

for any unit vectors $\mathbf{e},\mathbf{e}_0$. Define the step and constants

\begin{equation}\label{eq:geomstep}
 \mathbf{h}(s)=\begin{cases}\boldsymbol{\alpha},&s\leq0,\\ \mathbf{b}_{\max},&s>0,\end{cases}
 \qquad \alpha_{\min}:=\min_i\alpha_i>0,
 \qquad K:=\max\{1,L_fL_Q\},
\end{equation}

where $L_f$ bounds the Lipschitz constant of the variable update $f$ on $[0,1]^m$ and $L_Q$ bounds the Lipschitz constant of the check update $Q$ on $[0,\mathbf{b}_{\max}]$, both in the componentwise supremum norm. Their product bounds the Lipschitz constant of $\F_\lambda$ in that norm.

\begin{lemma}[Finite-time perturbation]\label{lem:finiteperturbation}
For every integer $n\geq1$, the coupling in~\eqref{eq:geomcoupling} gives

\begin{equation}\label{eq:geomfiniteerror}
 \bigl\|\F_{\omega_{w,\mathbf{e}}}^n\mathbf{h}-\F_{\omega_{\mathbf{e}_0}}^n\mathbf{h}\bigr\|_\infty
 \leq nK^n\rho\bigl(\Delta_w+\rho\norm{\mathbf{e}-\mathbf{e}_0}_2\bigr)
\end{equation}

where the supremum includes position and message component.
\end{lemma}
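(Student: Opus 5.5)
The plan is to prove~\eqref{eq:geomfiniteerror} by a telescoping induction on $n$. Each step pays a Lipschitz factor for the accumulated error plus a fresh one-step error coming from the change of measure. Write $\F_1=\F_{\omega_{w,\mathbf e}}$, $\F_0=\F_{\omega_{\mathbf e_0}}$, $\mathbf u_k=\F_1^k\mathbf h$, $\mathbf w_k=\F_0^k\mathbf h$ and $e_k=\|\mathbf u_k-\mathbf w_k\|_\infty$. Then
\[
 \mathbf u_{k+1}-\mathbf w_{k+1}=(\F_1\mathbf u_k-\F_0\mathbf u_k)+(\F_0\mathbf u_k-\F_0\mathbf w_k).
\]
The second bracket is at most $K e_k$ in norm, since $K$ bounds the Lipschitz constant of every $\F_\lambda$. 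If the first bracket is at most $K\rho\kappa$ uniformly in $k$, the recursion $e_{k+1}\le Ke_k+K\rho\kappa$ with $e_0=0$ gives $e_n\le\rho\kappa\sum_{j=1}^{n}K^j\le nK^n\rho\kappa$, using $K\ge1$. That is exactly~\eqref{eq:geomfiniteerror} with $\kappa$ from~\eqref{eq:geomcoupling}.

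The structural input for the one-step bound is monotonicity. Every iterate $\mathbf u_k$ is componentwise nondecreasing in $s$ and takes values in $[0,\mathbf b_{\max}]\subset[0,1]^m$. This holds because $\mathbf h$ has these properties, and averaging, $Q$ and $f$ all preserve spatial order and commute with translations. The same holds for $Q(A\mathbf u_k)$, with values in $[0,1]^m$. So every profile to which an average is applied has componentwise total variation at most $1$.

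The key averaging estimate is as follows. Let $\mathbf g$ be nondecreasing with total variation at most $1$ per component. Couple $X_1\sim\omega_{w,\mathbf e}$ and $X_0\sim\omega_{\mathbf e_0}$ as in~\eqref{eq:geomcoupling}. Monotonicity of $\mathbf g$ gives
\[
 |A_{\omega_{w,\mathbf e}}\mathbf g(s)-A_{\omega_{\mathbf e_0}}\mathbf g(s)|\le \mathbb E\,[\mathbf g(s+X_0+\kappa)-\mathbf g(s+X_0-\kappa)].
\]
Since $X_0$ has density at most $M_0=\rho/2$, the right side is at most $M_0\int(\mathbf g(y+\kappa)-\mathbf g(y-\kappa))\,dy=2\kappa M_0\,\mathrm{TV}(\mathbf g)\le\rho\kappa$. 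The reference measure must be the continuum projection, since only it has a bounded density; this is why the coupling is arranged with $\mathbf e_0\cdot\mathbf Z$ on one side.

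To bound $\F_1\mathbf u-\F_0\mathbf u$, I would split $f(A_1Q(A_1\mathbf u))-f(A_0Q(A_0\mathbf u))$ through the intermediate term $f(A_1Q(A_0\mathbf u))$. I expect this to be the main obstacle. A naive split gives $L_f(L_Q+1)\rho\kappa$, which need not be $\le K\rho\kappa$. I would therefore first try a single coupling for the composite. Realize both averages by independent pairs $(X_1,X_0)$ and $(X_1',X_0')$, and bound the inner and outer changes by one monotone-difference argument. The inner average contributes $L_fL_Q\rho\kappa$ through the Lipschitz bounds. The outer average acts on the monotone profile $Q(A_0\mathbf u)$, and its contribution $L_f\rho\kappa$ should then be absorbed, for example by exploiting that only one of the two averages needs re-coupling at a time when the telescoping is arranged per averaging step rather than per full update. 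If the constant cannot be brought down to $K$, I would accept a harmless change of $K$ to $\max\{1,L_f(L_Q+1)\}$ in~\eqref{eq:geomstep}. Only the qualitative form $nC^n\rho\kappa$ is used downstream, to obtain a common advancement time.
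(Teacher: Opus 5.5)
Your telescoping architecture is sound. Placing the change of measure on the monotone iterate $\mathbf u_k$, using only its bounded variation and the density bound $M_0=\rho/2$ of $\omega_{\mathbf e_0}$, is a legitimate alternative to the paper's route. As written, however, the proposal does not prove the lemma. Your one-step bound is $L_f(L_Q+1)\rho\kappa$, which strictly exceeds $K\rho\kappa$ whenever $L_fL_Q\geq1$, so the recursion $e_{k+1}\le Ke_k+K\rho\kappa$ is not established. The remedies you sketch do not close this. A joint coupling of both averages still has to pass through the nonlinear $Q$ between them. Telescoping per averaging step only reorganizes the bookkeeping; it does not lower the cost $L_f\rho\kappa$ you assign to each outer average. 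Replacing $K$ by $\max\{1,L_f(L_Q+1)\}$ proves a different inequality, and $K$ is also the constant fixed in~\eqref{eq:geomneighborhood} and~\eqref{eq:geomwidth}.

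The excess comes from two lossy estimates, and both can be repaired inside your framework. First, compare each average with $\mathbf g(s+X_0)$ itself rather than sandwiching. Writing $A_1=A_{\omega_{w,\mathbf e}}$ and $A_0=A_{\omega_{\mathbf e_0}}$, monotonicity gives $A_1\mathbf g(s)-A_0\mathbf g(s)\le\mathbb E[\mathbf g(s+X_0+\kappa)-\mathbf g(s+X_0)]\le M_0\kappa$, and symmetrically from below. So each change of average costs $M_0\kappa=\rho\kappa/2$, not $\rho\kappa$. Second, for the outer average do not rely only on $\mathrm{TV}\le1$ of $Q(A_0\mathbf u_k)$. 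Since $\omega_{\mathbf e_0}$ has density at most $M_0$ and each component of $\mathbf u_k$ is monotone with variation at most one, $A_0\mathbf u_k$ is $M_0$-Lipschitz. Hence $Q(A_0\mathbf u_k)$ is $L_QM_0$-Lipschitz, and the coupling bounds the outer change by $L_QM_0\kappa$. Together these give $\|\F_1\mathbf u_k-\F_0\mathbf u_k\|_\infty\le2L_fL_QM_0\kappa\le K\rho\kappa$, and your recursion yields $e_n\le\rho\kappa\sum_{j=1}^nK^j\le nK^n\rho\kappa$. (When $L_Q\geq1$, as for both code families, the first correction alone suffices.)

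The paper instead puts the measure change on the reference trajectory $\F_0^k\mathbf h$, which is $K^kM_0$-Lipschitz for $k\geq1$. There the factor $L_Q$ on the outer term appears automatically through Lipschitz propagation. The first update is handled separately through the unit jump of $\mathbf h$, the per-step error $2K^{k+1}M_0\kappa$ grows with $k$, and summation gives exactly $2nK^nM_0\kappa$. Your repaired version has a $k$-uniform one-step error and a slightly smaller total, at the price of the monotonicity bookkeeping.
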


The error estimate~\eqref{eq:geomfiniteerror} separates two changes: $\Delta_w$ is the grid approximation error, and $\rho\norm{\mathbf e-\mathbf e_0}_2$ is the error from changing the projection direction, both defined by the coupling bound~\eqref{eq:geomcoupling}. The factor $K$ from~\eqref{eq:geomstep} bounds amplification by one update. For a fixed finite time $n$, both errors can therefore be made small. This finite-time statement, rather than an interchange of a grid limit and an infinite iteration limit, is what the directional covering argument uses.
\begin{proof}
The reference distribution function is $M_0$-Lipschitz. The perturbed distribution function lies between reference distribution functions shifted by $\pm\kappa$. Since each component of $\mathbf{h}$ has jump at most one, the two inner averages of $\mathbf{h}$ differ by at most $M_0\kappa$, and the reference average is $M_0$-Lipschitz. Changing the inner average and then the outer average therefore gives a first-update error at most $2KM_0\kappa$.

For any $H$-Lipschitz vector profile $\mathbf{u}$, let $\lambda,\lambda'$ be averaging distributions whose coupled displacements differ by at most $\kappa$, as in the projection coupling~\eqref{eq:geomcoupling}. Changing the averaging distribution gives

\begin{equation}\label{eq:geomlipschitzerror}
 \|\F_{\lambda'}\mathbf{u}-\F_\lambda \mathbf{u}\|_\infty\leq2KH\kappa
\end{equation}

by changing the two averages separately. The reference profile after $n\geq1$ updates is $K^nM_0$-Lipschitz. Writing $E_n$ for the supremum-norm error between the two profiles after $n$ updates gives

\[
 E_1\leq2KM_0\kappa,\qquad
 E_{n+1}\leq K E_n+2K^{n+1}M_0\kappa.
\]

Induction gives $E_n\leq2nK^nM_0\kappa$. Substituting the projected-density bound $M_0=\rho/2$ from Section~\ref{sec:directions} yields the finite-time error bound~\eqref{eq:geomfiniteerror}. In particular, we never assume that the initial step itself is Lipschitz.
\end{proof}

\subsection{A common time for all directions}\label{sec:halfspace}
The directional estimates must hold at a common time before they can be applied around one sphere. A finite cover supplies finitely many advancement times, and the residual-time comparison below turns them into one bound valid in every direction.

By Proposition~\ref{prop:free} (free half-line convergence) and spatial monotonicity, each direction $\mathbf{e}$ has an integer $N_{\mathbf{e}}$ such that

\begin{equation}\label{eq:geomdirectionaltime}
 N_{\mathbf{e}}\geq\max\left\{1,\left\lceil\frac{\log(1/2)}{\log\theta}\right\rceil\right\},
 \qquad \F_{\omega_{\mathbf{e}}}^{N_{\mathbf{e}}}\mathbf{h}(s)\leq\boldsymbol{\alpha}/4\quad(s\leq2).
\end{equation}

The lower bound on $N_{\mathbf{e}}$ will also control the deep interior of the ball. Around each reference direction choose the open neighborhood

\begin{equation}\label{eq:geomneighborhood}
 \mathcal O_{\mathbf{e}}=\left\{\mathbf{e}'\in S^{d-1}:\norm{\mathbf{e}'-\mathbf{e}}_2<
 \frac{\alpha_{\min}}{8N_{\mathbf{e}}K^{N_{\mathbf{e}}}\rho^2}\right\}.
\end{equation}

By compactness, select a finite subcover indexed by $1\leq j\leq m_{\mathrm{dir}}$. Its directions and times are $\mathbf e_j$ and $N_j$. Set

\begin{equation}\label{eq:geomwidth}
 w_0=\max\left\{2,\lceil\rho\rceil,
 \left\lceil\frac{8d}{\alpha_{\min}}\max_j(N_jK^{N_j})-\frac12\right\rceil\right\}.
\end{equation}

For $w\geq w_0$, the sufficient width choice~\eqref{eq:geomwidth} gives $\Delta_w\leq \alpha_{\min}/(8N_jK^{N_j}\rho)$ for every $j$. For any direction $\mathbf{e}$, select a covering $\mathbf{e}_j$. The grid and direction errors in~\eqref{eq:geomfiniteerror} are each at most $\alpha_{\min}/8$ at time $N_j$, so

\begin{equation}\label{eq:geomdiscreteadvance}
 \F_{\omega_{w,\mathbf{e}}}^{N_j}\mathbf{h}(s)\leq\boldsymbol{\alpha}/2\qquad(s\leq2).
\end{equation}

\begin{lemma}[Uniform half-space advancement]\label{lem:commonhalfspace}
Define

\begin{equation}\label{eq:geomcommontime}
 N_*:=\max_jN_j,\qquad t_*:=N_*\lceil\rho N_*+1\rceil.
\end{equation}

For every integer $t\geq t_*$, every $w\geq w_0$, and every direction $\mathbf{e}$,

\begin{equation}\label{eq:geomhalfadvance}
 \F_{\omega_{w,\mathbf{e}}}^t \mathbf{h}(s)\leq\boldsymbol{\alpha}/2\qquad(s\leq2).
\end{equation}

\end{lemma}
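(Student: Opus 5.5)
The plan is to bootstrap the single-time estimate~\eqref{eq:geomdiscreteadvance} into all later times. I would use three facts about the one-dimensional update $\F=\F_{\omega_{w,\mathbf e}}$: it preserves order, it commutes with spatial translation, and it has finite dependence. For the last point, $|\mathbf e\cdot\mathbf Z_w|\leq\norm{\mathbf Z_w}_2\leq\sqrt d=\rho$, so $\omega_{w,\mathbf e}$ is a symmetric probability measure supported in $[-\rho,\rho]$. None of these facts needs a density, so they hold for the discrete measure $\omega_{w,\mathbf e}$ exactly as for $\omega_{\mathbf e}$. Fix a direction $\mathbf e$ and $w\geq w_0$, choose a covering index $j$ with $\mathbf e\in\mathcal O_{\mathbf e_j}$, and write $N=N_j$.

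\emph{Step 1 (one advancement dominates a shifted step).} By~\eqref{eq:geomdiscreteadvance}, $\F^N\mathbf h(s)\leq\boldsymbol\alpha/2\leq\boldsymbol\alpha$ for $s\leq2$. For $s>2$, $\F^N\mathbf h(s)\leq\mathbf b_{\max}$ because $f$ maps into $[0,\mathbf b_{\max}]$. Hence $\F^N\mathbf h\leq\mathbf h(\cdot-2)$.

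\emph{Step 2 (induction on blocks).} Applying $\F^N$ to the inequality of Step 1 and using translation invariance gives $\F^{2N}\mathbf h\leq(\F^N\mathbf h)(\cdot-2)\leq\mathbf h(\cdot-4)$. Inductively, for every $k\geq1$,
\[
 \F^{kN}\mathbf h(s)\leq\boldsymbol\alpha/2\quad(s\leq2k).
\]
The precise form is that $\F^{kN}\mathbf h\leq(\F^N\mathbf h)(\cdot-2(k-1))$, and Step 1 then gives the bound $\boldsymbol\alpha/2$ on $s\leq2k$.

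\emph{Step 3 (residual steps).} This is where I expect the only real care to be needed, since $t$ need not be a multiple of $N$. Write $t=kN+r$ with $k=\lfloor t/N\rfloor$ and $0\leq r<N$.
\begin{itemize}
\item Each update at $s$ reads only positions in $[s-2\rho,s+2\rho]$.
\item If the input is at most $\boldsymbol\alpha/2$ on that window, averaging and the contraction~\eqref{eq:smallbox} with $t=1/2$ give an output at most $\theta\boldsymbol\alpha/2\leq\boldsymbol\alpha/2$.
\item By induction on the residual steps, $\F^{kN+r}\mathbf h(s)\leq\boldsymbol\alpha/2$ for $s\leq2k-2\rho r$, and in particular for $s\leq2k-2\rho N$.
\end{itemize}

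\emph{Step 4 (the bookkeeping for $t_*$).} It remains to check that $2k-2\rho N\geq2$, that is, $k\geq\rho N+1$. Since $N\leq N_*$ and $t\geq t_*=N_*\lceil\rho N_*+1\rceil\geq N\lceil\rho N+1\rceil$, we get $k=\lfloor t/N\rfloor\geq\lceil\rho N+1\rceil\geq\rho N+1$. This yields~\eqref{eq:geomhalfadvance} for $s\leq2$. The constants $N_j$, $w_0$ and $t_*$ do not depend on $\mathbf e$ or $w$, so the bound is uniform in the direction and in $w\geq w_0$.
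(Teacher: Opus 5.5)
Your proposal is correct and follows the paper's proof: dominate $\F^{N_j}\mathbf h$ by the shifted step $\mathbf h(\cdot-2)$, iterate in blocks using order preservation and translation invariance, and absorb the residual $\tau<N_j$ updates using the dependence radius $2\rho$ and $T(\boldsymbol\alpha/2)\leq\boldsymbol\alpha/2$. The only difference is cosmetic: you check $k\geq\rho N_j+1$ directly with $N_j$, whereas the paper bounds $k$ and $\tau$ through $N_*$.
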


The common time $t_*$ in~\eqref{eq:geomcommontime} does more than take the largest directional time $N_*$. The advance bound~\eqref{eq:geomdiscreteadvance} initially holds at times $N_j$ that depend on the covering direction. Moreover, the free trajectory need not decrease at every iteration. The proof repeats the advance in blocks of $N_j$ updates and reserves enough spatial margin to absorb the remaining updates. This explains why the all-time bound~\eqref{eq:geomhalfadvance} follows even when different directions use different block lengths.
\begin{proof}
Fix $\mathbf{e},w$ and a covering index $j$, and abbreviate $\F=\F_{\omega_{w,\mathbf{e}}}$. The global bound $\F^n\mathbf{h}\leq \mathbf{b}_{\max}$ follows from $T(\mathbf{b}_{\max})\leq \mathbf{b}_{\max}$. Thus~\eqref{eq:geomdiscreteadvance} implies $\F^{N_j}\mathbf{h}(s)\leq \mathbf{h}(s-2)$. Order preservation and translation invariance give, for every integer $k\geq1$,
\begin{equation}\label{eq:geomblockadvance}
 \F^{kN_j}\mathbf h(s)\leq\F^{N_j}\mathbf h\bigl(s-2(k-1)\bigr)
 \leq\boldsymbol\alpha/2\qquad(s\leq2k).
\end{equation}
For the first inequality, iterate $\F^{N_j}\mathbf h\leq\mathbf h(\,\cdot-2)$; the second is the discrete advance bound~\eqref{eq:geomdiscreteadvance} at the shifted argument.

Write $t=kN_j+\tau$ with $0\leq \tau<N_j$. Each update has dependence radius $2\rho$. The additional $\tau$ updates therefore preserve the bound $\boldsymbol{\alpha}/2$ on $s\leq2k-2\rho \tau$, because~\eqref{eq:smallbox} gives $T^\tau(\boldsymbol{\alpha}/2)\leq\boldsymbol{\alpha}/2$. For $t\geq t_*$, we have $k\geq\lceil\rho N_*+1\rceil$ and $\tau<N_*$. Hence $2k-2\rho \tau\geq2$, which proves~\eqref{eq:geomhalfadvance}. This residual-time argument does not require the free step trajectory to decrease at every iteration.
\end{proof}

\subsection{Expansion of a large ball}\label{sec:ball}
This is the geometric passage to Step 3 of Figure~\ref{fig:proofstory}. The advancing balls are comparison profiles with small positive messages, while the actual shortened set remains fixed. Figure~\ref{fig:geometry} details the local plane comparison underlying this expansion.

Write $B_R(\mathbf{x})=\{\mathbf{z}\in\R^d:\norm{\mathbf{z}-\mathbf{x}}_2\leq R\}$ for the closed Euclidean ball and $B_R=B_R(0)$. A finite number of iterations depends only on a bounded neighborhood of each position. Within that neighborhood, a sufficiently large ball differs little enough from its tangent half-space to apply the uniform half-space advancement of Lemma~\ref{lem:commonhalfspace}.

\begin{figure}[htbp]
\centering

\def\GeomLeft{(a) Local half-space comparison}
\def\GeomRight{(b) Expansion of the comparison profile}

\begin{tikzpicture}[x=1cm,y=1cm,>=stealth,font=\small]
\begin{scope}[scale=.92]
 \clip (-2.35,-2.45) rectangle (2.5,2.6);
 \fill[blue!10] (-2.3,-2.2) -- plot[domain=-2.2:2.2,samples=80] ({sqrt(100-\x*\x)-10},\x) -- (-2.3,2.2) -- cycle;
 \begin{scope}
  \clip (.3,0) circle (2);
  \fill[orange!30] (-2.3,-2.2) rectangle (-1,2.2);
 \end{scope}
 \draw[blue!65!black,thick] plot[domain=-2.2:2.2,samples=80] ({sqrt(100-\x*\x)-10},\x);
 \draw[gray,densely dotted] (0,-2.2)--(0,2.2);
 \draw[orange!70!black,dashed,thick] (-1,-2.2)--(-1,2.2);
 \draw[black,dashed] (.3,0) circle (2);
 \fill (.3,0) circle (.045) node[above right] {$\mathbf{x}$};
 \draw[->] (.3,0)--(1.45,0) node[midway,below] {$\mathbf{e}$};
 \draw[->] (.3,0)--(1.3,1.732) node[midway,right] {$J$};
 \node[anchor=east,align=right] at (-1.15,-.7) {$\mathbf{H}=\boldsymbol{\alpha}$};
 \node[blue!65!black,anchor=east] at (-.25,1.75) {$\partial B_{R_0}$};
 \node at (-1,2.42) {$R_0-1$};
 \node at (0,2.42) {$R_0$};
\end{scope}
\node[anchor=north,align=center] at (0,-3.35) {\GeomLeft};
\begin{scope}[shift={(6.6,0)}]
 \fill[blue!10] (0,0) circle (1.5);
 \draw[blue!65!black,thick] (0,0) circle (1.5);
 \draw[green!40!black,thick,dash dot] (0,0) circle (1.9);
 \fill (0,0) circle (.035);
 \draw[->] (0,0)--(-1.25,.83) node[midway,above] {$R_0$};
 \draw[->] (0,0)--(1.65,.94) node[pos=.62,above,fill=white,inner sep=1pt] {$R_0+1$};
 \node at (0,-.4) {$t=0:\quad \mathbf{v}_{R_0}=\boldsymbol{\alpha}$};
 \node[anchor=north,align=center] at (0,-2.1) {$\norm{\mathbf{z}}_2\leq R_0+1:$\\[2pt]
   $(\F_{\nu_w}^{t_*}\mathbf{v}_{R_0})(\mathbf{z})\leq\boldsymbol{\alpha}/2$};
 \node[anchor=north,align=center] at (0,-3.35) {\GeomRight};
\end{scope}
\end{tikzpicture}

\caption{Two-dimensional sections of the geometric comparison. In (a), the dashed circle is the dependence ball $B_J(\mathbf{x})$. Its orange part, defined by $\mathbf{e}\cdot \mathbf{z}\leq R_0-1$, lies inside the initial ball $B_{R_0}$ by~\eqref{eq:curvature}. The dotted line is the tangent plane $\mathbf{e}\cdot \mathbf{z}=R_0$, and $\mathbf{H}(\mathbf{z})=\mathbf{h}(\mathbf{e}\cdot \mathbf{z}-R_0+1)$. In (b), the comparison profile $\mathbf{v}_{R_0}$ initially equals $\boldsymbol{\alpha}$ on the blue ball and $\mathbf{b}_{\max}$ outside it; after $t_*$ updates it is at most $\boldsymbol{\alpha}/2$ on the larger ball. The actual shortened set $S$, omitted from the drawing, contains the initial ball on the lattice and has zero initial messages. Its initial profile is bounded above by the comparison profile, as used in Section~\ref{sec:torus}. Lengths and radius ratios are schematic; the radial increase is exaggerated for visibility.}\label{fig:geometry}

\end{figure}

Define the dependence range, radius, and comparison profile by

\begin{equation}\label{eq:geomradius}
 J=2\rho t_*,\qquad R_0=2J+J^2+1,\qquad
 \mathbf{v}_R(\mathbf{x})=\begin{cases}\boldsymbol{\alpha},&\norm{\mathbf{x}}_2\leq R,\\ \mathbf{b}_{\max},&\norm{\mathbf{x}}_2>R.\end{cases}
\end{equation}

\begin{lemma}[Ball expansion]\label{lem:ballexpansion}
For every $R\geq R_0$ and $w\geq w_0$,

\begin{equation}\label{eq:geomballadvance}
 \F_{\nu_w}^{t_*}\mathbf{v}_R(\mathbf{x})\leq\boldsymbol{\alpha}/2\quad(\norm{\mathbf{x}}_2\leq R+1),
 \qquad \F_{\nu_w}^{t_*}\mathbf{v}_R\leq \mathbf{b}_{\max}\quad\text{everywhere}.
\end{equation}

\end{lemma}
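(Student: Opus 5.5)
The second bound in~\eqref{eq:geomballadvance} follows from order preservation. Since $\mathbf v_R\leq\mathbf b_{\max}$ and $\F_{\nu_w}$ applied to the constant $\mathbf b_{\max}$ gives $T(\mathbf b_{\max})\leq\mathbf b_{\max}$, induction gives $\F_{\nu_w}^{t}\mathbf v_R\leq\mathbf b_{\max}$ for every $t$.

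For the first bound, fix $\mathbf x$ with $\norm{\mathbf x}_2\leq R+1$. The support of $\nu_w$ lies in $[-1,1]^d\subset B_\rho$, so each update has dependence radius $2\rho$. Hence $\F_{\nu_w}^{t_*}\mathbf u(\mathbf x)$ depends only on the values of $\mathbf u$ on $B_J(\mathbf x)$, with $J=2\rho t_*$. By order preservation, if two profiles satisfy $\mathbf u\leq\mathbf u'$ on $B_J(\mathbf x)$, then $\F_{\nu_w}^{t_*}\mathbf u(\mathbf x)\leq\F_{\nu_w}^{t_*}\mathbf u'(\mathbf x)$. To make this rigorous, replace $\mathbf u'$ by $\min\{\mathbf u',\mathbf b_{\max}\}$ outside $B_J(\mathbf x)$ and compare with $\mathbf u$ modified to agree there. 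I treat two cases.

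\emph{Deep interior, $\norm{\mathbf x}_2\leq R-J$.} Here $B_J(\mathbf x)\subset B_R$, so $\mathbf v_R=\boldsymbol\alpha$ on the dependence ball. Therefore $\F_{\nu_w}^{t_*}\mathbf v_R(\mathbf x)\leq T^{t_*}(\boldsymbol\alpha)\leq\theta^{t_*}\boldsymbol\alpha$ by~\eqref{eq:smallbox}. Since $t_*\geq N_*\geq\lceil\log(1/2)/\log\theta\rceil$ by~\eqref{eq:geomdirectionaltime}, this is at most $\boldsymbol\alpha/2$.

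\emph{Boundary layer, $R-J<\norm{\mathbf x}_2\leq R+1$.} Put $\mathbf e=\mathbf x/\norm{\mathbf x}_2$ and compare $\mathbf v_R$ on $B_J(\mathbf x)$ with the planar step $\mathbf H(\mathbf z)=\mathbf h(\mathbf e\cdot\mathbf z-R+1)$. We need the curvature statement~\eqref{eq:curvature}: every $\mathbf z\in B_J(\mathbf x)$ with $\mathbf e\cdot\mathbf z\leq R-1$ lies in $B_R$. To see this, decompose $\mathbf z$ along $\mathbf e$; its perpendicular part has length at most $J$ because $\mathbf x$ is parallel to $\mathbf e$. Also $\mathbf e\cdot\mathbf z\geq\norm{\mathbf x}_2-J>R-2J>0$, using $R\geq R_0>2J$. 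Hence
\[
\norm{\mathbf z}_2^2\leq (R-1)^2+J^2\leq R^2,
\]
because $J^2\leq 2R-1$ follows from $R\geq R_0=2J+J^2+1$. Consequently $\mathbf v_R\leq\mathbf H$ on $B_J(\mathbf x)$. Where $\mathbf e\cdot\mathbf z\leq R-1$ both profiles equal $\boldsymbol\alpha$; elsewhere $\mathbf H=\mathbf b_{\max}\geq\mathbf v_R$. The localized comparison gives $\F_{\nu_w}^{t_*}\mathbf v_R(\mathbf x)\leq\F_{\nu_w}^{t_*}\mathbf H(\mathbf x)$.

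The profile $\mathbf H$ depends on $\mathbf z$ only through $\mathbf e\cdot\mathbf z$. Averaging against $\nu_w$ therefore preserves this planar form and acts on the one-dimensional variable by the projected law $\omega_{w,\mathbf e}$. Thus $\F_{\nu_w}^{t_*}\mathbf H(\mathbf x)=\F_{\omega_{w,\mathbf e}}^{t_*}\mathbf h(s)$ with $s=\mathbf e\cdot\mathbf x-R+1\leq 2$. Lemma~\ref{lem:commonhalfspace} then gives the bound $\boldsymbol\alpha/2$, uniformly in the direction $\mathbf e$ and in $w\geq w_0$.

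The main obstacle is the boundary-layer case. There the localization step must be justified carefully, since the planar comparison profile exceeds $\mathbf v_R$ only inside the dependence ball. The curvature inequality must also correctly use the lower bound $\mathbf e\cdot\mathbf z>0$, which is what the choice $R_0=2J+J^2+1$ provides.
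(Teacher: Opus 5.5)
Your proof is correct and follows the paper's argument: finite dependence within radius $J$; the homogeneous contraction $T^{t_*}(\boldsymbol\alpha)\leq\theta^{t_*}\boldsymbol\alpha\leq\boldsymbol\alpha/2$ deep inside the ball; and, near the sphere, comparison with the inward-shifted planar step $\mathbf H$, reduction to the projected law $\omega_{w,\mathbf e}$, and Lemma~\ref{lem:commonhalfspace} at $s=R_x-R+1\leq2$. The differences are minor: you split the cases at $\norm{\mathbf x}_2=R-J$ rather than $R/2$, and you replace the Taylor--Hessian estimate~\eqref{eq:curvature} by the Pythagorean bound $\norm{\mathbf z}_2^2\leq(R-1)^2+J^2\leq R^2$, which is more elementary and needs only $J^2\leq2R-1$ once $0<\mathbf e\cdot\mathbf z\leq R-1$.
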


Here $J=2\rho t_*$ in~\eqref{eq:geomradius} is the distance within which initial messages can affect a point after the common time $t_*$. The initial comparison profile $\mathbf v_R$ in the same equation assigns the small positive bound $\boldsymbol\alpha$ to a ball of radius $R$. The expansion bound~\eqref{eq:geomballadvance} says that, after $t_*$ updates, the smaller bound $\boldsymbol\alpha/2$ holds on the larger ball of radius $R+1$. To obtain this conclusion, the curvature estimate~\eqref{eq:curvature} only needs to compare a sphere and a plane inside the dependence ball of radius $J$. The shortfall from the tangent plane is absorbed by shifting the planar step inward by one unit in~\eqref{eq:geomballcomparison}.
\begin{proof}
The update at $\mathbf{x}$ depends only on initial values within distance $J$. If $\norm{\mathbf{x}}_2<R/2$, this region lies in $B_R$ because $R>2J$. Its value is $T^{t_*}(\boldsymbol{\alpha})\leq\theta^{t_*}\boldsymbol{\alpha}\leq\boldsymbol{\alpha}/2$, by~\eqref{eq:geomdirectionaltime}.

For $R/2\leq R_x:=\norm{\mathbf{x}}_2\leq R+1$, put $\mathbf{e}=\mathbf{x}/R_x$. For any displacement $\norm{\boldsymbol{\delta}}_2\leq J$, Taylor's formula for the Euclidean norm gives

\begin{equation}\label{eq:curvature}
 0\leq\norm{\mathbf{x}+\boldsymbol{\delta}}_2-R_x-\mathbf{e}\cdot \boldsymbol{\delta}
 \leq\frac{J^2}{2(R_x-J)}\leq\frac{J^2}{R-2J}<1.
\end{equation}

Indeed, the Hessian of the norm at $\mathbf{y}\ne0$ is $(I-\mathbf{u}\mathbf{u}^{\mathsf T})/\norm{\mathbf{y}}_2$, with $\mathbf{u}=\mathbf{y}/\norm{\mathbf{y}}_2$, and has operator norm at most $1/\norm{\mathbf{y}}_2$. Along $\mathbf{y}=\mathbf{x}+t\boldsymbol{\delta}$, $0\leq t\leq1$, the denominator is at least $R_x-J>0$. The integral remainder gives the factor $1/2$, and $R\geq2J+J^2+1$ gives the strict final inequality in~\eqref{eq:curvature}. This also covers $d=1$, where the Hessian is zero away from the origin.

Set $\mathbf{H}(\mathbf{z})=\mathbf{h}(\mathbf{e}\cdot \mathbf{z}-R+1)$. If $\mathbf{z}=\mathbf{x}+\boldsymbol{\delta}$ is in the dependence range and $\mathbf{e}\cdot \mathbf{z}\leq R-1$, the curvature bound~\eqref{eq:curvature} gives $\norm{\mathbf{z}}_2<R$. Thus $\mathbf{v}_R(\mathbf{z})\leq \mathbf{H}(\mathbf{z})$ throughout the dependence range: where $\mathbf{H}=\boldsymbol{\alpha}$, both initial values equal $\boldsymbol{\alpha}$; elsewhere $\mathbf{H}=\mathbf{b}_{\max}\geq \mathbf{v}_R$. A planar profile updates with the projected one-dimensional distribution. Finite dependence and order therefore give

\begin{equation}\label{eq:geomballcomparison}
 \F_{\nu_w}^{t_*}\mathbf{v}_R(\mathbf{x})
 \leq\F_{\omega_{w,\mathbf{e}}}^{t_*}\mathbf{h}(R_x-R+1)\leq\boldsymbol{\alpha}/2,
\end{equation}

where the last inequality uses $R_x-R+1\leq2$ and the uniform half-space bound~\eqref{eq:geomhalfadvance}. The global bound follows from $\mathbf{v}_R\leq \mathbf{b}_{\max}$ and $T(\mathbf{b}_{\max})\leq \mathbf{b}_{\max}$.
\end{proof}

\begin{remark}[Critical nuclei and the size of the initial region]\label{rem:nucleation}
Classical nucleation theory~\cite{Volmer1926} explains why a small favorable region can fail to grow: bulk free-energy gain competes with interfacial cost. In the isotropic short-range approximation used for comparison in~\cite[Fig. 1 and accompanying text]{Nishino2011}, a spherical droplet in dimension $d\geq2$ has
\begin{equation}\label{eq:nucleation}
 \mathcal G(R)=d\omega_d\sigma R^{d-1}-\omega_d\Delta f\,R^d,
 \qquad R_{\mathrm c}=\frac{(d-1)\sigma}{\Delta f}.
\end{equation}
Here $\omega_d$ is the unit-ball volume, $\sigma>0$ the interfacial free energy per unit area, and $\Delta f>0$ the bulk gain per unit volume. The maximum occurs at $R_{\mathrm c}$; growth above it lowers $\mathcal G$. At fixed $\sigma,\Delta f$, a fixed initial radius can become insufficient as $d$ increases, although $R_{\mathrm c}$ remains finite at each finite $d$. Favorable initial regions and propagation costs also arise in coupled constraint problems and mean-field models~\cite{Hassani2013,Caltagirone2014}.

For decoding, it is the small-message region surrounding the permanently shortened set that grows. We do not identify $\sigma,\Delta f$ with the recursion's potential. The rigorous comparison is~\eqref{eq:curvature}: after fixing the common finite time, a sufficiently large radius makes curvature small in the whole dependency region. The sufficient radius $R_0$ in~\eqref{eq:geomradius} works for every $R\geq R_0$; it is not a minimal radius. Section~\ref{sec:numerics} illustrates dependence on the initial square size without testing the physical formula~\eqref{eq:nucleation}.
\end{remark}

\subsection{Translation to a finite torus}\label{sec:torus}
Ball expansion supplies a local comparison. Its translations establish Step 4 of Figure~\ref{fig:proofstory}: all positions become small, after which contraction gives complete decoding.

Because $\nu_w$ is supported on $(1/w)\mathbb Z^d$, every update of $\F_{\nu_w}$ stays within a coset of this lattice. On the zero coset, the scale relation~\eqref{eq:scalerelation} identifies this update with the integer-lattice operator $\F_w$ in~\eqref{eq:latticeoperator}. We write $\F=\F_w$ below. Thus the preceding comparison on $\R^d$ applies at $\mathbf x=\mathbf i/w$ to the actual offsets $\mathbf k\in K_w$.

On the integer lattice define
\begin{equation}\label{eq:geomlatticeball}
 \mathcal B=\{\mathbf{i}\in\mathbb Z^d:\norm{\mathbf{i}/w}_2\leq R_0\},\qquad
 \mathcal E=\{-1,0,1\}^d.
\end{equation}

Let $\mathbf{v}_{\mathcal B}$ equal $\boldsymbol{\alpha}$ on $\mathcal B$ and $\mathbf{b}_{\max}$ elsewhere. Since $w\geq\lceil\rho\rceil$, every displacement in $\mathcal E$ has scaled length at most one. Lemma~\ref{lem:ballexpansion} (ball expansion) therefore gives

\begin{equation}\label{eq:geomblockexpansion}
 \F^{t_*}\mathbf{v}_{\mathcal B}\leq \mathbf{v}_{\mathcal B+\mathcal E}.
\end{equation}

Here $\F$ is the unshortened lattice operator, and the profile on the right again uses the two levels $\boldsymbol{\alpha},\mathbf{b}_{\max}$.

Shorten a hypercube containing $\mathcal B$ in the torus $(\mathbb Z/L\mathbb Z)^d$. Let $\mathbf{z}_t$ denote the periodic lift of the actual density evolution, initially zero on the shortened positions and at most $\mathbf{b}_{\max}$ elsewhere. Then, for every $m\geq0$ and every $\boldsymbol{\delta}\in\{-m,\ldots,m\}^d$,

\begin{equation}\label{eq:geomtranslate}
 \mathbf{z}_{mt_*}\leq\tau_{\boldsymbol{\delta}} \mathbf{v}_{\mathcal B},
 \qquad (\tau_{\boldsymbol{\delta}} \mathbf{v})(\mathbf{i})=\mathbf{v}(\mathbf{i}-\boldsymbol{\delta}).
\end{equation}

To prove this, initially $\mathbf{z}_0\leq \mathbf{v}_{\mathcal B}$; additional periodic copies of the shortened region only reduce $\mathbf{z}_0$. If~\eqref{eq:geomtranslate} holds at $m$ and $\boldsymbol{\delta}$, then for every $\mathbf{e}\in\mathcal E$,

\begin{equation}\label{eq:geomtranslationinduction}
 \mathbf{z}_{(m+1)t_*}\leq\F^{t_*}\mathbf{z}_{mt_*}
 \leq\tau_{\boldsymbol{\delta}}\F^{t_*}\mathbf{v}_{\mathcal B}
 \leq\tau_{\boldsymbol{\delta}} \mathbf{v}_{\mathcal B+\mathcal E}
 \leq\tau_{\boldsymbol{\delta}+\mathbf{e}}\mathbf{v}_{\mathcal B}.
\end{equation}

The inequalities use, in order, shortening, monotonicity and translation invariance,~\eqref{eq:geomblockexpansion}, and $\mathcal B+\mathbf{e}\subset\mathcal B+\mathcal E$. This proves the induction. A separate inequality is maintained for each translate; no commutation with componentwise minima is assumed.

Use $s(w)$ defined in~\eqref{eq:seedside}, with

\begin{equation}\label{eq:geomseed}
 L>s(w)+2w.
\end{equation}

The hypercube contains $\mathcal B$ and fits in the torus; the coupling offsets are distinct modulo $L$. In~\eqref{eq:geomtranslate}, take $m=L$ and, for each representative $\mathbf{i}\in\{0,\ldots,L-1\}^d$, choose $\boldsymbol{\delta}=\mathbf{i}$. Since $0\in\mathcal B$, every position satisfies $\mathbf{z}_{Lt_*}(\mathbf{i})\leq\boldsymbol{\alpha}$. The small-message contraction condition~\eqref{eq:smallbox} then gives uniform convergence to zero. This completes the discrete local-shortening conclusion of Theorem~\ref{thm:transfer} (transfer to local shortening).

The translated-profile bound~\eqref{eq:geomtranslate} is a family of upper bounds for the same actual profile $\mathbf z_{mt_*}$. Its translation $\tau_{\boldsymbol\delta}$ centers the small-message ball at $\boldsymbol\delta$. As the block count $m$ increases, more centers are available. To bound one target site, choose a center whose ball contains it; different target sites may use different comparisons. The shortened set $S$ from Definition~\ref{def:DE} (shortened density evolution) stays fixed throughout this argument.

\subsection{What the bounds make explicit}\label{sec:quantitative}
The finite collection of directional advancement times $N_j$ determines the sufficient coupling width $w_0$ in~\eqref{eq:geomwidth}, the common advancement time $t_*$ in~\eqref{eq:geomcommontime}, and the initial comparison radius $R_0$ in~\eqref{eq:geomradius}. In particular,

\begin{equation}\label{eq:geomsizebound}
 R_0=4\rho t_*+4\rho^2t_*^2+1,
 \qquad s(w)\leq2wR_0+3.
\end{equation}

For a prescribed shortening ratio $\zeta>0$, the bound

\begin{equation}\label{eq:geomratetolerance}
 L\geq s(w)(1+1/\zeta)^{1/d},\qquad L>s(w)+2w
 \quad\Longrightarrow\quad
 \frac{s(w)^d}{L^d-s(w)^d}\leq\zeta
\end{equation}

controls the ratio of shortened to remaining positions. The corresponding rate bounds depend on the ensemble's variable and check counts. For a uniform componentwise message target $\xi\boldsymbol{\alpha}$, $0<\xi<1$, a sufficient iteration count is

\begin{equation}\label{eq:geomiterationbound}
 Lt_*+\left\lceil\frac{\log\xi}{\log\theta}\right\rceil.
\end{equation}

For fixed dimension, channel, degrees, and chosen coupling width, this is linear in the torus side $L$ for a fixed target.

The proof establishes that the $N_j$ are finite, but does not yet bound them numerically from the degrees and channel parameter alone. Thus these sufficient formulas expose the remaining finite-time quantities without asserting practical shortening sizes. All constants may depend on the fixed dimension; the argument does not give uniform bounds when the dimension grows.

The dependence on the largest selected advancement time is particularly costly. Since $K\geq1$, the width choice~\eqref{eq:geomwidth}, common-time definition~\eqref{eq:geomcommontime}, and size bound~\eqref{eq:geomsizebound} imply, for fixed $d$,
\begin{align}\label{eq:constantdependence}
 w_0&=O_d(\alpha_{\min}^{-1}N_*K^{N_*}),& t_*&=O_d(N_*^2),\notag\\
 R_0&=O_d(N_*^4),&s(w_0)&=O_d(\alpha_{\min}^{-1}N_*^5K^{N_*}).
\end{align}
Here $O_d$ hides constants depending only on $d$; $K$ and $\alpha_{\min}$ remain displayed. These are sufficient upper bounds expressed through $N_*$, not lower bounds on the required sizes. In particular, the exponential factor comes from finite-time perturbation control. The proof supplies no quantitative dependence of $N_*$ on the gap to the threshold. The finite directional cover also depends on $d$, so these formulas give no estimate uniform in dimension. Once that cover and its times have been chosen, $R_0$ is independent of every subsequently chosen $w\geq w_0$.

For fixed degrees, dimension, width, and message target, a direct density-evolution update costs $O(L^d)$ arithmetic operations. Thus~\eqref{eq:geomiterationbound} gives an $O(L^{d+1})$ sufficient operation bound for this synchronous calculation. A finite-code complexity estimate would additionally require the section size chosen in Lemma~\ref{lem:finiteDE} (finite-iteration approximation).

\begin{remark}[Channel dependence of the sufficient choices]\label{rem:channelconstants}
The choices of the small rectangle in~\eqref{eq:regularbox} and~\eqref{eq:mng2box} shrink as $\eps\downarrow0$, and substituting them separately into~\eqref{eq:geomwidth} can worsen that bound. A single construction chosen at any fixed $\eps_0$ below the relevant threshold works for all $0\leq\eps\leq\eps_0$: couple the BEC observations so that smaller erasure probabilities reveal more bits, and use order preservation of every update. Thus no increase in width or shortening is needed when improving the channel. At the opposite endpoint, for fixed MN degrees with $g=2$,
\begin{equation}\label{eq:thetalimit}
 \lim_{\eps\uparrow1-r/\ell}\frac{1+\eps}{2}
 =1-\frac{r}{2\ell}<1.
\end{equation}
The contraction factor in~\eqref{eq:mng2box} therefore does not itself approach one at the capacity boundary. Any deterioration of the advancement-time estimates near that boundary requires a separate analysis of $N_*$; no divergence rate follows from the present proof.
\end{remark}

\section{Regular LDPC codes}\label{sec:regular}
For regular LDPC codes, we verify Step 1 of Figure~\ref{fig:proofstory}: the potential-threshold definition gives positivity, and a union bound gives contraction near zero. The common spatial proof then supplies Steps 2--4 and establishes Theorem~\ref{thm:main} (regular-code decoding) below the potential threshold. Here $m=1$, so one-component vectors are identified with scalars and written in ordinary type. No classification of the positive scalar fixed points is needed.

At each position place $M$ variables of degree $d_v$ and $d_vM/d_c$ checks of degree $d_c$, with socket allocation as in Definition~\ref{def:ensemble} (socket allocation and shortening). Put
\begin{equation}\label{eq:scalarfunctions}
 p=d_v-1\geq2,\quad q=d_c-1,\quad
 Q(x)=1-(1-x)^q,\quad f(y)=\eps y^p,\quad b_{\max}=\eps,\quad D=1.
\end{equation}
These are the BEC density-evolution functions~\cite{RU2001}; substituting them in the integer-lattice update~\eqref{eq:latticeoperator} gives the regular-code recursion. For $\eps>0$, they satisfy the map and primitive assumptions with
\begin{equation}\label{eq:regularprimitives}
 G(x)=x-\frac{1-(1-x)^{q+1}}{q+1},\qquad
 F(y)=\frac{\eps}{p+1}y^{p+1}.
\end{equation}

\begin{definition}[Regular-code potential threshold]\label{def:potential}
The scalar potential~\cite{Maxwell2014} and its threshold are
\begin{align}
 U(x;\eps)&=xQ(x)-G(x)-\frac{\eps}{p+1}Q(x)^{p+1},\label{eq:potential}\\
 \pot&=\sup\{\eps\in[0,1]:U(x;\eps)\geq0\text{ for all }x\in[0,1]\}.\label{eq:potthreshold}
\end{align}
\end{definition}

If $0<\eps<\pot$, choose $\eps<\eps'<\pot$. The admissible parameters in~\eqref{eq:potthreshold} form a downward-closed set, so $U(x;\eps')\geq0$ for every $x$. Therefore
\begin{equation}\label{eq:regularpositive}
 U(x;\eps)=U(x;\eps')+
 \frac{\eps'-\eps}{p+1}Q(x)^{p+1}>0
 \qquad(0<x\leq1).
\end{equation}
In particular, the fixed-point positivity condition~\eqref{eq:positivefixed} holds. For the small interval, the union bound $Q(x)\leq qx$ gives a sufficient choice
\begin{equation}\label{eq:regularbox}
 \beta=q^{-p/(p-1)},\qquad
 \alpha=\min\{\eps/2,\beta/4\},\qquad \theta=\tfrac14.
\end{equation}
Indeed, for $0\leq t\leq1$,
\begin{equation}\label{eq:regularcontraction}
 T(t\alpha)\leq q^p(t\alpha)^p
 \leq(\alpha/\beta)^{p-1}t\alpha\leq\tfrac14t\alpha.
\end{equation}
Theorem~\ref{thm:transfer} (transfer to local shortening) now proves message convergence in Theorem~\ref{thm:main} (regular-code decoding). For a variable at the integer position $\mathbf i$, write $y(\mathbf i)=A_wQ(A_wu)(\mathbf i)$ using the lattice average~\eqref{eq:latticeoperator}. On the computation tree, the $d_v$ incoming check branches and the channel observation are independent~\cite[Section III]{MD2013}. Thus, outside the shortened set,
\begin{equation}\label{eq:regularposterior}
 e_{\mathrm{post}}(\mathbf i)=\eps y(\mathbf i)^{d_v}
 \leq\eps(qu_0)^{d_v}\quad\text{if }0\leq u(\mathbf i)\leq u_0\text{ at all positions}.
\end{equation}
The inequality follows from $Q(x)\leq qx$ and averaging. Message convergence therefore implies posterior convergence. When $\eps=0$, all regular-code messages are zero from the start.

\subsection{Finite codes and rate}\label{sec:rate}
Fix finite $L,w$. For an arbitrarily small target posterior erasure probability, first choose the finite iteration count guaranteed by Theorem~\ref{thm:main} (regular-code decoding), and then take $M$ large in Lemma~\ref{lem:finiteDE} (finite-iteration approximation). This gives finite graph realizations with a small average bit-erasure probability.

Write $s=s(w)$ for the shortened side length defined in~\eqref{eq:seedside}. There are $s^d$ shortened positions and $n=M(L^d-s^d)$ transmitted variables; let $k$ be the dimension of the shortened code. The check-count argument for local shortening~\cite[Section V]{MD2013} gives
\begin{equation}\label{eq:rate}
 \frac{k}{n}\geq
 1-\frac{d_v}{d_c}\frac{L^d}{L^d-s^d}
 =1-\frac{d_v}{d_c}-O(L^{-d}).
\end{equation}
The inequality allows dependent checks; an equality for the actual rate would need a rank argument. To compare geometries at the same dimension, let $V=L^d$ and consider a shortened slab of thickness $h$ and a shortened cube of side $s$. Their exact shortening fractions are
\begin{equation}\label{eq:shorteningcomparison}
 \frac{|S_{\mathrm{slab}}|}{V}=\frac{hL^{d-1}}{L^d}=hV^{-1/d},
 \qquad \frac{|S_{\mathrm{cube}}|}{V}=\frac{s^d}{L^d}=s^dV^{-1}.
\end{equation}
Fix the degrees, channel, dimension, and a common coupling width $w$ large enough for both constructions; then choose finite $h$ and $s$ independently of $L$. The slab recursion reduces to the one-dimensional recursion~\cite[Proposition 1]{MD2013}. For regular codes, the corresponding reductions from $1-d_v/d_c$ in the check-count rate bound are $(d_v/d_c)h/(L-h)$ and $(d_v/d_c)s^d/(L^d-s^d)$. Their orders are $V^{-1/d}$ and $V^{-1}$, respectively, so the exponent improves for $d>1$. The sufficient $w$ and $s$ for the cube can be large: this is an asymptotic comparison at fixed sizes, not a finite-length ranking. For MN codes the geometric term in~\eqref{eq:mnactualrate} has the same dependence on the shortening fraction, with the punctured nullity controlled separately in Section~\ref{sec:mnrate}.

\subsection{Numerical illustration in two dimensions}\label{sec:numerics}
This example illustrates Step 1 and the dependence of propagation on the initial shortened region. It is separate from the analytic proof of Theorem~\ref{thm:main} (regular-code decoding). Figure~\ref{fig:regularpotential} plots the potential~\eqref{eq:potential} for regular $(3,6)$ codes below and above $\pot\simeq0.48815$. Unlike the schematic in Figure~\ref{fig:proofstory}, these curves show the scalar potential itself; the MN potential has two arguments.

\begin{figure}[tb]
\centering
\input{regular_potential.tex}
\caption{Scalar potential~\eqref{eq:potential} for regular $(3,6)$ codes. At $\eps=0.47$ it is positive for $x>0$; at $\eps=0.49$ it is negative on part of the interval. The right panel magnifies the region of negative potential for $\eps=0.49$. The curves are numerical evaluations; positivity below the threshold follows analytically from~\eqref{eq:regularpositive}.}\label{fig:regularpotential}
\end{figure}

We iterate the exact integer-lattice recursion~\eqref{eq:latticeoperator} with the scalar functions~\eqref{eq:scalarfunctions}, dimension $d=2$, $\eps=0.47$, and $w=2$. Each average is uniform over the $5\times5$ offsets. On an $L\times L$ torus, initialize $u=\eps$ outside a centered shortened square of odd side $s$ and hold $u=0$ inside it at every iteration. All positions update synchronously. Without shortening, the same initialization converges numerically to the positive homogeneous value $0.3994225867$, so this example lies in a regime where shortening changes the outcome.

Table~\ref{tab:numerics} reports floating-point results at two torus sizes. We stop when either $\max_{\mathbf i}u_t(\mathbf i)<10^{-10}$ (criterion A), or $t\geq100$ and $\max_{\mathbf i}|u_t(\mathbf i)-u_{t-1}(\mathbf i)|<10^{-13}$ (criterion B), with a limit of $10^4$ iterations. The calculations use double precision; the separable implementation of the two-dimensional average was checked against direct summation over all offsets and against the homogeneous recursion.

\begin{table}[tb]
\centering\small
\caption{Two-dimensional regular $(3,6)$ density evolution with $\eps=0.47$, $w=2$, and a shortened $s\times s$ square. A denotes a small maximum message; B denotes a small update residual. A small residual is numerical evidence of a plateau, not a proof of nonconvergence.}\label{tab:numerics}
\begin{tabular}{rrrrr}\toprule
$L$ & $s$ & Iterations & Criterion & $\max_{\mathbf i}u_t(\mathbf i)$\\\midrule
65 & 3 & 100 & B & $0.3994225867$\\
65 & 7 & 156 & B & $0.3994225866$\\
65 & 11 & 305 & B & $0.3994225855$\\
65 & 15 & 904 & B & $0.3994225769$\\
65 & 17 & 1020 & A & $2.14\times10^{-15}$\\
65 & 19 & 519 & A & $1.60\times10^{-18}$\\
129 & 3 & 100 & B & $0.3994225867$\\
129 & 7 & 156 & B & $0.3994225867$\\
129 & 11 & 305 & B & $0.3994225867$\\
129 & 15 & 904 & B & $0.3994225867$\\
129 & 17 & 1450 & A & $1.19\times10^{-15}$\\
129 & 19 & 949 & A & $7.61\times10^{-19}$\\
\bottomrule\end{tabular}

\end{table}

For both values of $L$, squares of side at most $15$ among those tested reach a positive plateau, whereas sides $17$ and $19$ reach criterion A. This behavior is consistent with the size requirement discussed in Remark~\ref{rem:nucleation}. The experiment neither determines a rigorous critical side nor bounds the constants $N_*,w_0,R_0$ of Section~\ref{sec:quantitative}. In particular, testing $w=2$ does not certify the theorem's sufficient width, and these density-evolution values are not finite-graph error measurements. The reproduction script \texttt{anc/regular\_bec.py} and its numeric outputs accompany the source files.

\section{MacKay--Neal codes at general fixed degrees}\label{sec:mn}
For MN codes, Step 1 in Figure~\ref{fig:proofstory} holds below the BEC capacity boundary for all integers $\ell>r\geq2$ and $g\geq2$. We prove small-message contraction and classify the fixed points before evaluating their potentials. An elementary comparison reduces nontrivial fixed-point positivity to $r=g=2$. Steps 2--4 then give local-shortening decoding, and Section~\ref{sec:mnrate} adds the transmitted-rate argument needed for Step 5, capacity achievement.

\subsection{Ensemble, update, and a small invariant region}\label{sec:mnmodel}
Use the spatially coupled MN construction~\cite{MN2011}. At each position of $(\Z/L\Z)^d$, place $rM/\ell$ punctured variables of degree $\ell$, $M$ transmitted variables of degree $g$, and $M$ checks with $r$ punctured and $g$ transmitted sockets. Partition each variable-socket type independently into the $(2w+1)^d$ offset groups indexed by $\{-w,\ldots,w\}^d$, and match the corresponding variable and check groups uniformly. Require $rM/\ell$, $rM/(2w+1)^d$, and $gM/(2w+1)^d$ to be integers. In the shortened hypercube, fix both variable types to zero and omit their transmissions.

The two-message BEC recursion is given in~\cite{MN2011,MN2014}. For $0<\eps<1$, write $\mathbf{u}=(a,b)^{\mathsf T}$ for the punctured and transmitted variable-to-check erasure profiles and set
\begin{align}
 Q_1(a,b)&=1-(1-a)^{r-1}(1-b)^g,\notag\\
 Q_2(a,b)&=1-(1-a)^r(1-b)^{g-1},\label{eq:mncheck}\\
 f(\mathbf{y})&=(y_1^{\ell-1},\eps y_2^{g-1})^{\mathsf T},\qquad
 \mathbf{b}_{\max}=(1,\eps)^{\mathsf T},\label{eq:mnf}\\
 \F_\lambda \mathbf{u}&=f\bigl(A_\lambda Q(A_\lambda \mathbf{u})\bigr).
 \label{eq:mnoperator}
\end{align}
For the finite code ensemble, the averaging operator in the MN recursion~\eqref{eq:mnoperator} is the integer-lattice average $A_w$ from~\eqref{eq:latticeoperator}; the scale relation~\eqref{eq:scalerelation} gives the equivalent formulation with $A_{\nu_w}$. The averaging order follows from independent socket matching. Both maps are order preserving, and $T(\mathbf{b}_{\max})=\mathbf{b}_{\max}$ for $T=f\circ Q$. For $\mathbf y=A_\lambda Q(A_\lambda\mathbf u)$, independence of the incoming branches in the computation tree~\cite{MN2011} gives, at an unshortened position,
\begin{equation}\label{eq:mnposterior}
 e_{P,\mathrm{post}}=y_1^\ell,\qquad e_{T,\mathrm{post}}=\eps y_2^g.
\end{equation}
All $\ell$ branches are needed to leave a punctured variable erased; for a transmitted variable, all $g$ branches and its channel observation must be erased. Continuity of the averages and $Q(0)=0$ give posterior convergence from message convergence. On the unit square, the check update $Q$ has Lipschitz constant at most $h_0=r+g-1$, and the variable update $f$ has Lipschitz constant at most $\max\{\ell-1,g-1\}$ in the componentwise supremum norm. Thus one full update is Lipschitz with constant
\begin{equation}\label{eq:mnconstants}
 K_{\mathrm{MN}}=(r+g-1)\max\{\ell-1,g-1\}.
\end{equation}

\begin{lemma}[A contracting rectangle]\label{lem:mnbox}
For every $\ell>r\geq2$, $g\geq2$, and $0<\eps<1$, there exist $0<\boldsymbol{\alpha}\leq\mathbf{b}_{\max}$ and $0<\theta<1$ such that the update~\eqref{eq:mnoperator} satisfies the small-box condition~\eqref{eq:smallbox} for every probability averaging measure $\lambda$.
\end{lemma}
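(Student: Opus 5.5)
The reduction is to the homogeneous map. If $\mathbf u\leq t\boldsymbol\alpha$ componentwise, then $A_\lambda\mathbf u\leq t\boldsymbol\alpha$ for any probability measure $\lambda$. Because $Q$ is order preserving, $A_\lambda Q(A_\lambda\mathbf u)\leq Q(t\boldsymbol\alpha)$, and hence $\F_\lambda\mathbf u\leq T(t\boldsymbol\alpha)$. So the claim is independent of $\lambda$. It suffices to find $\boldsymbol\alpha=(\alpha_1,\alpha_2)\leq(1,\eps)$ and $\theta<1$ with $T(t\boldsymbol\alpha)\leq\theta t\boldsymbol\alpha$ for $0\leq t\leq1$. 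The condition must hold componentwise, so I bound $Q$ by the union bound: $Q_1(a,b)\leq(r-1)a+gb$ and $Q_2(a,b)\leq ra+(g-1)b$.

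The transmitted component is linear when $g=2$, so it cannot be made small by higher powers. This is the step to handle carefully, and I would treat the two components asymmetrically.

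\textbf{Second component.} We have $T_2(t\boldsymbol\alpha)=\eps Q_2(t\boldsymbol\alpha)^{g-1}$.
\begin{itemize}
\item If $g=2$, this equals $\eps Q_2$ exactly. I would not use the union bound here but the exact form $Q_2=1-(1-ta_1)^r(1-tb)$ with $b=\alpha_2$. Using $1-(1-x)(1-y)\leq x+y$, this gives $\eps Q_2\leq\eps(rt\alpha_1+t\alpha_2)$. Choose $\alpha_1=c\,\alpha_2$ with $\eps rc\leq(1-\eps)/2$. Then $T_2\leq t\alpha_2\,\eps(1+rc)\leq t\alpha_2(1+\eps)/2$, which is consistent with the factor $(1+\eps)/2$ in Remark~\ref{rem:channelconstants}.
\item If $g\geq3$, we get $T_2\leq\eps(t(r\alpha_1+(g-1)\alpha_2))^{g-1}$. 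This is at most $\frac12 t\alpha_2$ once $\alpha_1,\alpha_2$ are small with $\alpha_1\leq\alpha_2$, since $g-1\geq2$ gives a superlinear power in $t$.
\end{itemize}

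\textbf{First component.} We have $T_1(t\boldsymbol\alpha)=Q_1^{\ell-1}\leq\bigl(t((r-1)\alpha_1+g\alpha_2)\bigr)^{\ell-1}$ with $\ell-1\geq2$. With $\alpha_1=c\alpha_2$, this is at most $t\,C^{\ell-1}\alpha_2^{\ell-1}$ for a constant $C=(r-1)c+g$. We need this to be at most $\frac12 t c\alpha_2$, which holds once $\alpha_2^{\ell-2}\leq c/(2C^{\ell-1})$. So I would fix $c$ first, using the $g=2$ constraint or $c=1$ otherwise. Then I take $\alpha_2\leq\eps$ small enough for all the power conditions, ensuring also $\alpha_1=c\alpha_2\leq1$. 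Finally I set $\theta=\max\{1/2,(1+\eps)/2\}<1$.

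The main obstacle is exactly this linear coupling when $g=2$. There the transmitted contraction factor is at least $\eps$, and the punctured component feeds back linearly through $r a$. Forcing $\alpha_1\ll\alpha_2$ resolves it. This is possible only because the punctured update has exponent $\ell-1\geq2$, which lets $\alpha_1$ be a small multiple of $\alpha_2$ while still contracting.
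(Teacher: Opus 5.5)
Your proof is correct and follows essentially the paper's argument. You reduce to the homogeneous map via averaging and order preservation, apply the union bound, and use a symmetric box with a superlinear power when $g\geq3$. For $g=2$ you make the punctured bound a small multiple of the transmitted bound so that $\eps(1+rc)\leq(1+\eps)/2$; this is the paper's choice $\boldsymbol\alpha=\eta(1,c)$ with $c(\theta-\eps)\geq\eps r$ and $\theta=(1+\eps)/2$, with $c$ replaced by its reciprocal.
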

\begin{proof}
If $g\geq3$, define
\begin{equation}\label{eq:mng3box}
 p_{\min}=\min\{\ell-1,g-1\},\quad
 \beta=h_0^{-p_{\min}/(p_{\min}-1)},\quad
 \eta=\min\{\eps/2,\beta/4\},\quad
 \boldsymbol{\alpha}=(\eta,\eta)^{\mathsf T},\quad\theta=\tfrac14.
\end{equation}
For $0\leq t\leq1$ and $0\leq \mathbf{u}\leq t\boldsymbol{\alpha}$, the union bound in~\eqref{eq:mncheck} gives $Q_1,Q_2\leq h_0t\eta\leq1$. Averaging preserves this bound, so each updated component is at most
\begin{equation}\label{eq:mng3contraction}
 h_0^{p_{\min}}(t\eta)^{p_{\min}}
 =t\eta\left(\frac{t\eta}{\beta}\right)^{p_{\min}-1}
 \leq\tfrac14t\eta.
\end{equation}

For $g=2$, the transmitted update is linear to first order near zero. We allow the transmitted bound to be larger than the punctured bound. Define
\begin{align}
 \theta&=\frac{1+\eps}{2},\qquad
 c=\max\left\{1,\frac{2\eps r}{1-\eps}\right\},\qquad
 H_1=r-1+2c,\notag\\
 \eta&=\min\left\{\frac12,\frac{\eps}{2c},
 \left(\frac{\theta}{H_1^{\ell-1}}\right)^{1/(\ell-2)}\right\},
 \qquad\boldsymbol{\alpha}=\eta(1,c)^{\mathsf T}.
 \label{eq:mng2box}
\end{align}

The ratio $c$ in the rectangle choice~\eqref{eq:mng2box} allows a larger transmitted-message bound than the punctured-message bound. The margin $\theta-\eps>0$ absorbs the contribution of the punctured coordinate to the transmitted update. The scale $\eta$ then makes the punctured update, whose exponent $\ell-1$ is at least two, contract as well. The component bounds~\eqref{eq:mng2contraction} verify these two requirements separately.

For $0\leq t\leq1$ and $0\leq\mathbf u\leq t\boldsymbol\alpha$, the same union bound gives $Q_1\leq H_1t\eta$ and $Q_2\leq(r+c)t\eta$. Hence
\begin{align}
 (\F_\lambda \mathbf{u})_1&\leq(H_1t\eta)^{\ell-1}\leq\theta t\eta,\notag\\
 (\F_\lambda \mathbf{u})_2&\leq\eps(r+c)t\eta\leq\theta ct\eta.
 \label{eq:mng2contraction}
\end{align}
The first inequality uses the small scale $\eta$ in the rectangle choice~\eqref{eq:mng2box}; the last uses $c(\theta-\eps)\geq\eps r$. Thus $\F_\lambda \mathbf{u}\leq\theta t\boldsymbol{\alpha}$ in both cases. Iteration gives convergence to zero from every profile bounded by $\boldsymbol{\alpha}$, and the posterior-erasure formulas~\eqref{eq:mnposterior} then give the same conclusion for both posterior erasures.
\end{proof}

\subsection{Potential and classification of fixed points}\label{sec:mnpotential}
The one-dimensional capacity results for $(\ell,2,2)$ and $(\ell,3,3)$ MN ensembles use fixed-point potentials~\cite{MN2013,MN2014}. We retain that classification and prove the sign needed by Theorem~\ref{thm:transfer} (transfer to local shortening) for general $r,g$. The potential is the specialization of the vector construction~\cite{Vector2012} used in~\cite[Section II-C]{MN2014}.

\begin{definition}[MN potential and fixed-point classes]\label{def:mnpotential}
For the check update~\eqref{eq:mncheck} and variable update~\eqref{eq:mnf}, take
\begin{align}
 D&=\operatorname{diag}(r,g),\qquad
 F_{\mathrm{MN}}(\mathbf{y};\eps)=\frac r\ell y_1^\ell+\eps y_2^g,\notag\\
 G_{\mathrm{MN}}(a,b)&=ra+gb+(1-a)^r(1-b)^g-1,\label{eq:mnpotingredients}\\
 U_{\mathrm{MN}}(\mathbf{x};\eps)&=Q(\mathbf{x})^{\mathsf T}D\mathbf{x}-G_{\mathrm{MN}}(\mathbf{x})
                    -F_{\mathrm{MN}}(Q(\mathbf{x});\eps).
 \label{eq:mnpotential}
\end{align}
They satisfy $\nabla F_{\mathrm{MN}}=Df$ and $\nabla G_{\mathrm{MN}}=DQ$, as required in~\eqref{eq:primitives}. Define
\begin{align}
 \mathcal F_{\mathrm{MN}}(\eps)
 &=\{\mathbf{x}\in[0,\mathbf{b}_{\max}]\setminus\{0\}:\mathbf{x}=T(\mathbf{x})\},\notag\\
 \mathcal F_{\mathrm{nt}}^{\mathrm{MN}}(\eps)
 &=\{(a,b)\in\mathcal F_{\mathrm{MN}}(\eps):0<a<1\}.
 \label{eq:mnfixedset}
\end{align}
Following the BEC classification in~\cite[Section III]{MN2014}, $(0,0)$ and $(1,\eps)$ are the trivial fixed points; the elements of $\mathcal F_{\mathrm{nt}}^{\mathrm{MN}}(\eps)$ are the nontrivial fixed points.
\end{definition}

For $0<\eps<1$, the first fixed-point equation with $a=0$ forces $b=0$, while $a=1$ gives $b=\eps$. Thus these classes exhaust the fixed points. At a nontrivial fixed point, $0<b<\eps<1$. At the nonzero trivial fixed point, $Q(1,\eps)=(1,1)^{\mathsf T}$, and the individual terms in~\eqref{eq:mnpotential} give
\begin{equation}\label{eq:mnbadpotential}
 U_{\mathrm{MN}}(1,\eps;\eps)
 =(r+g\eps)-(r+g\eps-1)-(r/\ell+\eps)
 =1-\frac r\ell-\eps.
\end{equation}
The capacity gap is therefore the potential of this endpoint. It remains to prove strict positivity on the nontrivial fixed-point set $\mathcal F_{\mathrm{nt}}^{\mathrm{MN}}(\eps)$ defined in~\eqref{eq:mnfixedset}.

\subsection{An elementary positivity proof for all degrees}\label{sec:mnpositivity}
The first fixed-point equation eliminates one message variable, as in~\cite[Section III]{MN2014}. The resulting expression permits two comparisons of degrees. These compare values of an algebraic expression along fixed-point branches; the induced channel parameter may change during a comparison.

\begin{lemma}[Nontrivial fixed-point potentials]\label{lem:mnpositive}
Let $\ell>r\geq2$, $g\geq2$, and $0<\eps<1$. Then $U_{\mathrm{MN}}(\mathbf{x};\eps)>0$ for every $\mathbf{x}\in\mathcal F_{\mathrm{nt}}^{\mathrm{MN}}(\eps)$.
\end{lemma}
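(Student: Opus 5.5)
The plan is to use the fixed-point equations to rewrite $U_{\mathrm{MN}}$ as an explicit algebraic function on a one-parameter branch, and then to compare degrees until only $r=g=2$ remains. Write $y_1=Q_1(a,b)$ and $y_2=Q_2(a,b)$ at a nontrivial fixed point $(a,b)\in\mathcal F_{\mathrm{nt}}^{\mathrm{MN}}(\eps)$, so that $a=y_1^{\ell-1}$ and $b=\eps y_2^{g-1}$. Then $ray_1-\frac r\ell y_1^\ell=r(1-\frac1\ell)y_1^\ell$ and $gby_2-\eps y_2^g=(g-1)by_2$. Substituting these in~\eqref{eq:mnpotential} gives
\begin{equation*}
 U_{\mathrm{MN}}=1-(1-a)^r(1-b)^g-r\Bigl(1-\tfrac1\ell\Bigr)a\bigl(1-y_1\bigr)-\tfrac r\ell a y_1-gb+(g-1)by_2 .
\end{equation*}
This expression no longer contains $\eps$ explicitly. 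The first fixed-point equation eliminates $b$, as in~\cite[Section III]{MN2014}. With $x:=y_1\in(0,1)$ we have $a=x^{\ell-1}$ and $(1-b)^g=(1-x)/(1-a)^{r-1}$. Every nontrivial fixed point therefore lies on the curve $x\mapsto(a(x),b(x))$, with $\eps=b/y_2^{g-1}$ determined along it. The admissible points are exactly those with $0<b$ and $\eps(x)<1$.

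The resulting function $\Phi_{\ell,r,g}(x)$ is to be shown positive on this admissible set. I would proceed in two comparisons, following the remark before the lemma that the induced channel parameter may change during a comparison.

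First, in $g$: write $\beta=1-b$, so that $\beta^g=(1-x)(1-a)^{1-r}$ is fixed by $x$. I would express the $b$-dependent terms through $\beta^g$ and $\beta$. The aim is to show that $\Phi_{\ell,r,g}(x)\geq\Phi_{\ell,r,2}(\tilde x)$ for a matched parameter $\tilde x$ that remains admissible. The natural matching keeps $a$ and the product $(1-a)^r(1-b)^g$ fixed. I would verify the comparison by elementary inequalities such as Bernoulli's inequality and convexity of $\beta\mapsto\beta^g$.

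Second, in $r$ at $g=2$: I would repeat the same matching device and reduce to $r=2$. The hypothesis $\ell>r$ is used here, so that $\ell-1\geq r$ controls the term $r(1-\frac1\ell)a(1-x)$.

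For the base case $r=g=2$, $b$ is explicit: $1-b=\sqrt{(1-x)/(1-a)}$. Then $\Phi_{\ell,2,2}$ becomes a function of $x$ and $s=\sqrt{(1-x)/(1-a)}$. After clearing the square root I would prove positivity directly. I would factor out $(1-s)$, which is positive because $b>0$, and bound the cofactor using $a=x^{\ell-1}\leq x^2$ for $\ell\geq3$. The condition $\eps<1$, i.e.\ $b<y_2$, should be exactly what excludes the region where the cofactor could vanish. I would check this against the known $(\ell,2,2)$ results of~\cite{MN2013}, but the argument itself would not rely on them.

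I expect the main obstacle to be the degree comparisons. They must respect both the admissibility constraint $\eps<1$ and the strictness of the inequality. The matched point after lowering $g$ or $r$ must still correspond to a genuine nontrivial fixed point with $\eps<1$; otherwise the base case gives no information. My first attempt would therefore choose the matching so that $a$ and $y_2$ are preserved. Then $b\mapsto\eps=b/y_2^{g-1}$ can only decrease as $g$ drops, because $y_2<1$. If that fails, I would instead prove positivity of $\Phi$ on the larger set $\{0<b<y_2\}$ directly. This may make the comparison monotone in each degree separately.
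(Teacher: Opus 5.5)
Your skeleton matches the paper's. You parametrize a nontrivial fixed point by $x=Q_1$ (the paper's $t=a^{1/(\ell-1)}$), eliminate $\eps$ through $b=\eps y_2^{g-1}$ and $b$ through $(1-b)^g=(1-x)/(1-a)^{r-1}$, and compare degrees at fixed $x$ down to $r=g=2$. Your matching that fixes $a$ and $(1-a)^r(1-b)^g=(1-a)(1-x)$ is simply fixing $x$.

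\textbf{The one explicit formula you write is wrong.} Since $ray_1-\frac r\ell y_1^\ell-ra=-ra+r(1-\frac1\ell)ax$, the term $-\frac r\ell ay_1$ must be $-\frac r\ell a$. As written, your expression exceeds $U_{\mathrm{MN}}$ by $\frac r\ell a(1-x)>0$, so positivity of your $\Phi$ would not give the lemma. With $C=(1-a)^r(1-b)^g=(1-a)(1-x)$, the correct branch function is
\[
 \chi-(g-1)\frac{C}{\chi}+(g-2)C-ra+r\Bigl(1-\frac1\ell\Bigr)ax,
 \qquad \chi=1-b=\Bigl(\frac{1-x}{(1-a)^{r-1}}\Bigr)^{1/g},
\]
which is $V_{r,g}(t)$ in \eqref{eq:mnbranchpotential}.

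\textbf{The substantive gap is that none of the three inequalities that make up the proof is supplied.} Your treatment of admissibility also points the wrong way. Your ``first attempt'' is impossible: with $\ell$ fixed, $a$ determines $x$, hence $\chi$ for each $g$, so $b$ and $y_2=1-C/\chi$ must move when $g$ changes.

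The resolution is to drop admissibility altogether. At fixed $x$, the quantities $C$ and $P=\chi^g=(1-x)/(1-a)^{r-1}$ do not depend on $g$. The $r$-comparison uses only $0<a<1$. The base case holds for every $0<x<1$, whatever channel parameter the branch point induces.

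The paper proves three inequalities.
\begin{enumerate}
\item $\partial_gV_{r,g}>0$ for real $g\geq2$. It writes $\chi=P^{1/g}$ with $P<1$ fixed, and bounds $C=(1-a)^rP\leq P$ when the coefficient of $C$ in the derivative is negative.
\item At $g=2$, the identity $\chi-C/\chi=a\sqrt{1-x}\,\sinh(rz)/\sinh z$ holds with $z=-\frac12\log(1-a)$. Monotonicity of $\sinh(rz)/r$ in $r$ then gives $V_{r,2}\geq\frac r2V_{2,2}$.
\item With $n=\ell-1$ and $\sigma_n(x)=\sum_{j=0}^{n-1}x^j$, the base inequality is $(n+1)(2-x^n)>2(n+1-nx)\sqrt{\sigma_n(x)}$. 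Its squared difference, divided by $x$, is strictly decreasing on $(0,1]$ with value $(n-1)^2>0$ at $x=1$.
\end{enumerate}
``Bernoulli and convexity'' and ``factor out $1-s$'' do not yet amount to any of these.

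Two of your expectations are also misplaced. For $r=g=2$, every branch point already has $b>0$ and $\eps<1$; there $\eps<1$ reduces to $(1-a)^2<1$. So $\eps<1$ cannot be what excludes a bad region in the base case. Also, $\ell>r$ plays no role in the $r$-comparison. Only $\ell\geq3$ enters, through $(n-1)^2>0$. The hypothesis $\ell>r$ is needed only for the trivial point $(1,\eps)$ in \eqref{eq:mnbadpotential}.
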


The nontrivial fixed-point set $\mathcal F_{\mathrm{nt}}^{\mathrm{MN}}(\eps)$ in~\eqref{eq:mnfixedset} restricts the message coordinates $a,b$ to actual homogeneous fixed points. The branch variable $t=a^{1/(\ell-1)}$ and the factors $A,P,\chi_g,C,z_g$ in~\eqref{eq:mnbranch} rewrite their potential as the scalar expression $V_{r,g}(t)$ in~\eqref{eq:mnbranchpotential}. The degree comparison~\eqref{eq:mngeneralpositive} concerns this expression with $\ell$ held fixed. Intermediate values need not describe fixed points at the original channel parameter. Positivity is proved on the algebraic domain used in the comparison and then applied to the actual fixed point.
\begin{proof}
Set $n=\ell-1\geq2$, the punctured-variable excess degree from the MN update~\eqref{eq:mnf}. For a nontrivial fixed point $(a,b)\in\mathcal F_{\mathrm{nt}}^{\mathrm{MN}}(\eps)$, with punctured and transmitted erasure coordinates as in~\eqref{eq:mncheck}, define
\begin{align}
 t&=a^{1/n},\qquad A=1-a=1-t^n,\qquad
 P=\frac{1-t}{A^{r-1}},\notag\\
 \chi_g&=P^{1/g}=1-b,\qquad C=A(1-t),\qquad z_g=1-C/\chi_g.
 \label{eq:mnbranch}
\end{align}
Here $0<t<1$, $0<P<1$, $Q_1=t$, $Q_2=z_g$, and $\eps z_g^{g-1}=b$. Substitution in~\eqref{eq:mnpotential}, using the last identity to eliminate $\eps$, gives
\begin{align}
 U_{\mathrm{MN}}(a,b;\eps)
 &=1-ra+\frac{rn}{n+1}at-C
       -b\bigl[1+(g-1)C/\chi_g\bigr]\notag\\
 &=\chi_g-(g-1)\frac C{\chi_g}+(g-2)C-ra+\frac{rn}{n+1}at
 =:V_{r,g}(t).
 \label{eq:mnbranchpotential}
\end{align}
We prove positivity first at $r=g=2$, then compare the general expression with that case.

At $r=g=2$, let $\sigma_n(t)=\sum_{j=0}^{n-1}t^j$. Since $P=1/\sigma_n(t)$, the branch-potential expression~\eqref{eq:mnbranchpotential} reduces to
\begin{equation}\label{eq:mn22potential}
 V_{2,2}(t)=t^n\left[\frac{2-t^n}{\sqrt{\sigma_n(t)}}-2+\frac{2n}{n+1}t\right].
\end{equation}
The two sides of the desired inequality
\begin{equation}\label{eq:mn22target}
 (n+1)(2-t^n)>2(n+1-nt)\sqrt{\sigma_n(t)}
\end{equation}
are positive, so squaring preserves its direction. Their squared difference is
\begin{align}
 P_n(t)&=(n+1)^2(2-t^n)^2-4(n+1-nt)^2\sigma_n(t)\notag\\
 &=4(n^2-1)t-4\sum_{j=2}^{n}t^j-4n^2t^{n+1}+(n+1)^2t^{2n}.
 \label{eq:mn22polynomial}
\end{align}
Normalize the squared difference $P_n(t)$ from~\eqref{eq:mn22polynomial} by setting $J_n(t)=P_n(t)/t$; this preserves its sign because $t>0$. Since $t^{2n-2}\leq t^{n-1}$ for $0<t\leq1$,
\begin{align}
 J_n'(t)
 &=-4\sum_{j=1}^{n-1}jt^{j-1}-4n^3t^{n-1}
       +(n+1)^2(2n-1)t^{2n-2}\notag\\
 &\leq-4\sum_{j=1}^{n-1}jt^{j-1}
       -(n-1)^2(2n+1)t^{n-1}<0.
 \label{eq:mn22derivative}
\end{align}
The coefficient identity used here is $4n^3-(n+1)^2(2n-1)=(n-1)^2(2n+1)$. Therefore
\begin{equation}\label{eq:mn22positive}
 J_n(t)>J_n(1)=(n-1)^2>0\qquad(0<t<1).
\end{equation}
The base-case positivity inequality~\eqref{eq:mn22target} and potential formula~\eqref{eq:mn22potential} yield $V_{2,2}(t)>0$ for every $0<t<1$. This inequality does not require the channel parameter induced by that branch point to lie in $[0,1]$.

Next keep $g=2$ and increase $r$. With $z=-\tfrac12\log A>0$, a finite geometric sum gives
\begin{align}
 \chi_2-C/\chi_2
 &=a\sqrt{1-t}\sum_{k=0}^{r-1}A^{k-(r-1)/2}
 =a\sqrt{1-t}\frac{\sinh(rz)}{\sinh z},\notag\\
 \frac{V_{r,2}(t)}{ra}
 &=\sqrt{1-t}\frac{\sinh(rz)}{r\sinh z}-1+\frac n{n+1}t.
 \label{eq:mnrcomparison}
\end{align}
For positive real $r$, the derivative of $\sinh(rz)/r$ has numerator $rz\cosh(rz)-\sinh(rz)$. This numerator is positive: as a function of $x=rz>0$, it vanishes at zero and has derivative $x\sinh x>0$. Hence
\begin{equation}\label{eq:mnrmonotone}
 V_{r,2}(t)\geq\frac r2V_{2,2}(t)>0\qquad(r\geq2).
\end{equation}

Finally fix $r,t$ from the actual fixed point and increase $g$. The condition $0<P<1$ in~\eqref{eq:mnbranch} remains valid throughout this comparison. The formula defines a smooth extension in real $g\geq2$, since $P>0$; only its integer values are needed for code degrees. In this extension, set $h=-\log P>0$ and $x=h/g$, and differentiate~\eqref{eq:mnbranchpotential}:
\begin{equation}\label{eq:mngderivative}
 \frac{\partial V_{r,g}(t)}{\partial g}
 =\frac{x}{g}e^{-x}
 +C\left[1-e^x+\left(1-\frac1g\right)xe^x\right].
\end{equation}
If the bracket is nonnegative, the derivative is positive. If it is negative, use $C=A^rP\leq P=e^{-gx}$. Replacing $C$ by this larger value multiplies a negative quantity and therefore gives a lower bound. It follows that
\begin{align}
 e^{(g-1)x}\frac{\partial V_{r,g}(t)}{\partial g}
 &\geq\frac{x}{g}e^{(g-2)x}+e^{-x}-1+\left(1-\frac1g\right)x\notag\\
 &\geq e^{-x}-1+x>0.
 \label{eq:mngmonotone}
\end{align}
The second inequality uses $g\geq2$, and the strict inequality uses $e^{-x}>1-x$ for $x>0$. Thus the comparison chain is
\begin{equation}\label{eq:mngeneralpositive}
 V_{r,g}(t)\geq V_{r,2}(t)\geq\frac r2V_{2,2}(t)>0.
\end{equation}
Together with the fixed-point potential representation~\eqref{eq:mnbranchpotential}, this proves the claim for every nontrivial fixed point.
\end{proof}

\subsection{Completion of the MN decoding theorem}\label{sec:mncompletion}
The fixed-point calculation now supplies Step 1, so the common flat-boundary, ball, and torus arguments give Steps 2--4 of Figure~\ref{fig:proofstory}. The rate calculation in Section~\ref{sec:mnrate} completes Step 5.

Lemma~\ref{lem:mnpositive} (nontrivial fixed-point positivity) handles $\mathcal F_{\mathrm{nt}}^{\mathrm{MN}}(\eps)$, and~\eqref{eq:mnbadpotential} handles the remaining nonzero fixed point. Consequently
\begin{equation}\label{eq:mnpositivecriterion}
 U_{\mathrm{MN}}(\mathbf{x}_\star;\eps)>0\quad
 \bigl(\mathbf{x}_\star\in\mathcal F_{\mathrm{MN}}(\eps),\quad0<\eps<1-r/\ell\bigr).
\end{equation}
The remaining conditions in Definition~\ref{def:admissible} (potential and contraction assumptions) follow from the check-update formulas~\eqref{eq:mncheck} and potential primitives~\eqref{eq:mnpotingredients}. Lemma~\ref{lem:mnbox} (contracting rectangle) supplies the small-message contraction condition~\eqref{eq:smallbox}, including $g=2$. Theorem~\ref{thm:transfer} (transfer to local shortening) therefore gives finite coupling and shortening sizes in every fixed dimension and proves the density-evolution assertion of Theorem~\ref{thm:mnmain} (MN decoding and capacity) for positive $\eps$. At $\eps=0$, compare with any positive parameter below $1-r/\ell$ by channel monotonicity. The message convergence gives posterior convergence through the punctured- and transmitted-bit posterior formulas~\eqref{eq:mnposterior}.

The dependence on dimension is confined to the directional and geometric arguments of Sections~\ref{sec:core} and~\ref{sec:geometry}. The fixed-point calculation~\eqref{eq:mngeneralpositive} uses only the homogeneous BEC recursion. Thus its general-degree conclusion does not require a separate wave-existence theorem for each pair of message types.

\subsection{Actual transmitted rate and capacity}\label{sec:mnrate}
Step 5 in Figure~\ref{fig:proofstory} requires a rate statement in addition to decoding. The shortened fraction vanishes as the torus grows, but puncturing also requires control of a projection rank. The decoding result provides that control through recovery of punctured variables.

After puncturing, the rate must be computed for the projected transmitted code. Let $V=L^d$, $V_S=s(w)^d$, and write the shortened parity-check matrix as $[H_P\ H_T]$. The column counts and the number of checks are
\begin{equation}\label{eq:mncounts}
 n_P=\frac r\ell M(V-V_S),\qquad n_T=M(V-V_S),\qquad m_c=MV.
\end{equation}
All ranks are over the binary field. The transmitted code is the projection of $\ker[H_P\ H_T]$, so its dimension satisfies
\begin{align}
 k_T&=n_T-\operatorname{rank}[H_P\ H_T]+\operatorname{rank}H_P,\label{eq:mnprojectedrank}\\
 \frac{k_T}{n_T}&\geq\frac r\ell-\frac{V_S}{V-V_S}
                  -\frac{\operatorname{nullity}H_P}{n_T}.
 \label{eq:mnactualrate}
\end{align}

In the count formulas~\eqref{eq:mncounts}, $n_P$ and $n_T$ count the remaining punctured and transmitted variables; $m_c$ counts check rows. The matrices $H_P$ and $H_T$ contain their respective columns. Puncturing can identify different full codewords that have the same transmitted coordinates. The quantity $\operatorname{nullity}H_P$, the dimension of the solution space of $H_P\mathbf p=0$, counts this lost freedom. Thus the rate lower bound~\eqref{eq:mnactualrate} separates the cost of shortening, $V_S/(V-V_S)$, from the cost of this ambiguity, $\operatorname{nullity}H_P/n_T$. The first vanishes by enlarging the torus; Lemma~\ref{lem:peelingrank} (recovery and punctured nullity) controls the second using recovery of punctured variables.
Indeed, the kernel of the projection consists of $(\mathbf{p},0)$ with $H_P\mathbf{p}=0$. Subtracting its dimension gives~\eqref{eq:mnprojectedrank}, and the bound $\operatorname{rank}[H_P\ H_T]\leq m_c$ gives~\eqref{eq:mnactualrate}.

\begin{lemma}[Finite-iteration approximation for the socket ensemble]\label{lem:finiteDE}
Fix the dimension, degrees, $L,w$, shortening set, and a finite number $t$ of SPA iterations in Definition~\ref{def:ensemble} (socket allocation and shortening). For each unshortened variable type $a$, let $E_{a,t}(\Gamma,\eps)$ be its average posterior erasure probability on a graph $\Gamma$, with the expectation over the BEC already taken. Let $\bar e_{a,t}(\eps)$ be the corresponding density-evolution average over unshortened positions. There is a finite constant $C_t$, independent of $M$ and $\eps\in[0,1]$, such that, for all sufficiently large admissible $M$,
\begin{equation}\label{eq:finiteDE}
 \left|\mathbb E_\Gamma E_{a,t}(\Gamma,\eps)-\bar e_{a,t}(\eps)\right|\leq\frac{C_t}{M}.
\end{equation}
For MN codes, this includes the punctured type and the experiment in which all transmitted variables are revealed, corresponding to $\eps=0$.
\end{lemma}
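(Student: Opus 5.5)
We will prove Lemma~\ref{lem:finiteDE} by the standard tree-like neighborhood argument of density evolution, adapted to the balanced socket allocation of Definition~\ref{def:ensemble}. Fix an unshortened variable type $a$ at a position $\mathbf i$ and consider the depth-$t$ computation graph of a uniformly chosen variable of that type and position. Because $L$, $w$, the degrees and $t$ are fixed, this neighborhood has at most a constant number $N_t$ of nodes and edges, independent of $M$. The SPA posterior at time $t$ on the BEC depends only on the channel observations inside this neighborhood, together with the shortening pattern there. We therefore split $\mathbb E_\Gamma E_{a,t}(\Gamma,\eps)$ according to whether the neighborhood is a tree. On the tree event, the erasure probability equals an explicit polynomial in $\eps$ with coefficients in $[0,1]$. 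On the complementary event, it lies in $[0,1]$.

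The first key step is to bound the cycle probability by $C'_t/M$. We expose the neighborhood breadth-first. Each new edge is matched inside a fixed offset group between positions $\mathbf i$ and $\mathbf i+\mathbf k$. Each such group contains a number of sockets proportional to $M$, since the counts $M$, $rM/\ell$, $gM/(2w+1)^d$ and so on are all multiples of $M$ divided by fixed constants. The uniform matching within a group selects a partner among the $\Theta(M)$ unused sockets. The chance that this partner belongs to a node already exposed is at most $N_t\cdot\max\deg/(cM-N_t)$. Summing over at most $N_t$ exposures gives $O(M^{-1})$. Because the matchings for different groups are independent uniform permutations, this holds jointly.

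The second key step is to identify the tree-conditioned law with density evolution, including the effects of sampling without replacement. On a tree, the distribution of the sequence of socket types and offsets encountered deviates from the independent i.i.d.\ law used in density evolution. The deviation comes only from conditioning on the $O(1)$ sockets already used, which changes each conditional probability by $O(N_t/M)$. Here the offset of a given socket is uniform over $K_w$ by the balanced partition, and the node owning a partner socket is determined by uniform matching. A hybrid (coupling) argument over the $N_t$ draws then bounds the total variation distance between the exposed tree and the ideal Galton--Watson-type tree by $C''_t/M$. On the ideal tree, the root posterior erasure probability is exactly the recursion~\eqref{eq:latticeoperator} iterated $t$ times from the initial profile, followed by the posterior formula (\eqref{eq:regularposterior} or \eqref{eq:mnposterior}). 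Averaging over the uniformly chosen unshortened position then gives $\bar e_{a,t}(\eps)$. All bounded quantities lie in $[0,1]$, so the constants do not depend on $\eps$. The MN punctured type and the revealed-transmitted experiment are the same computation with the corresponding observation probabilities, namely $0$ for revealed bits and $1$ for punctured bits.

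Combining the two steps gives the bound~\eqref{eq:finiteDE} with $C_t=C'_t+C''_t$. The main obstacle is the second step. Because the allocation partitions sockets rather than prescribing offset lists per variable, the sockets of one variable are drawn without replacement from group sizes that are fixed. The offsets of sibling edges are therefore weakly dependent, not i.i.d. I would handle this first by viewing the random partition of a node's sockets as a uniformly random assignment with fixed group counts. I would then bound, for at most $\max\deg$ draws, the deviation from multinomial sampling by $O(\max\deg^2/M)$ through a direct hypergeometric-versus-multinomial comparison, and add this term into $C''_t$.
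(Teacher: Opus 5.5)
Your proposal is correct and follows essentially the same route as the paper's proof: an $O(N_t^2/M)$ bound on collisions during the depth-$t$ exploration, a coupling of the without-replacement offset labels to independent uniform offsets at a further $O(N_t^2/M)$ cost, identification of the resulting multi-type computation tree with the two spatial averages of density evolution, averaging over roots, and the observation that neither failure event depends on $\eps$. Two small slips do not affect the argument: on the tree event the erasure probability is a polynomial in $\eps$ whose values, not its coefficients, lie in $[0,1]$, and the recursion to iterate is the shortened torus update $\F_S$ of Definition~\ref{def:DE} rather than the bare lattice operator~\eqref{eq:latticeoperator}.
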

\begin{proof}
We specialize the computation-tree argument for sparse random ensembles~\cite{RU2001} to the balanced offset groups of Definition~\ref{def:ensemble} (socket allocation and shortening). Explore the neighborhood needed for $t$ iterations from a uniformly chosen variable of the specified type. Bounded degrees and fixed $t$ bound the number of exposed sockets by a finite $D_t$, independent of $M$. Each position, socket type, and offset group has size at least $c_0M$ for a fixed $c_0>0$. At any exposure, the chance of encountering an already exposed variable or check is therefore $O(D_t/M)$. Summing over exposures bounds the probability of any collision by $O(D_t^2/M)$.

Balanced offset labels are sampled without replacement. Conditional on at most $D_t$ previous exposures, their probabilities differ from independent uniform offsets by $O(D_t/M)$. They can consequently be coupled to independent offsets with total failure probability $O(D_t^2/M)$. Outside these two failure events, the explored neighborhood agrees with the multi-type computation tree. Independent BEC observations on distinct transmitted variables give exactly the two spatial averages in Definition~\ref{def:DE} (shortened density evolution); shortened variables supply known values and punctured variables supply erasures. The posterior erasure indicator is in $[0,1]$, so the coupling failure probability bounds the error in its expectation. Averaging over roots proves~\eqref{eq:finiteDE}. Neither failure event depends on the BEC erasure probability. The argument keeps $L,w,t$ fixed while $M$ grows.
\end{proof}

\begin{lemma}[Recovery bounds the punctured nullity]\label{lem:peelingrank}
For a fixed MN graph $\Gamma$, reveal all transmitted variables. Let $U_{P,t}(\Gamma)$ be the number of punctured variables unrecovered after $t$ SPA iterations. This count depends only on the graph and iteration count, and satisfies
\begin{equation}\label{eq:peelingrank}
 \operatorname{nullity}H_P\leq U_{P,t}(\Gamma).
\end{equation}
\end{lemma}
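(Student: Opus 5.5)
We prove Lemma~\ref{lem:peelingrank} by a linear-algebra argument on the erasure decoder, with no probabilistic input. With all transmitted variables revealed, the SPA on the BEC is the peeling decoder, and its outcome is a deterministic function of $\Gamma$ and $t$. So $U_{P,t}(\Gamma)$ depends only on the graph and the iteration count. The inequality~\eqref{eq:peelingrank} will come from showing that every vector in $\ker H_P$ vanishes on the recovered punctured variables.

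\textbf{Step 1: the revealed experiment solves a homogeneous system.} Revealing all transmitted variables and taking the all-zero transmitted word, the decoder's unknowns are the punctured bits $\mathbf p$, constrained by $H_P\mathbf p=0$. This uses linearity of the code: erasure decoding depends only on the erasure pattern, not on the codeword values. Shortened variables are absent from $H_P$, since their columns were removed in~\eqref{eq:mncounts}.

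\textbf{Step 2: soundness of BEC message passing.} We show by induction on iterations that whenever a punctured variable becomes known, its value is forced by the linear constraints. Concretely, there is a set of rows of $H_P$ whose sum, restricted to the remaining unknowns, equals the unit vector of that variable.
\begin{itemize}
\item A check-to-variable message is non-erased only when all other incoming messages are known. The outgoing value is then the parity of those known values, which is a linear consequence of the check row together with the earlier derivations.
\item A punctured variable is recovered when at least one incoming check message is non-erased. Recall that punctured variables carry no channel observation.
\end{itemize}
Hence every $\mathbf p\in\ker H_P$ satisfies $p_i=0$ at every recovered index $i$. The reason is that each recovered value equals a linear functional of $\mathbf p$ that vanishes on the kernel and coincides with the coordinate functional $p_i$.

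\textbf{Step 3: dimension count.} By Step 2, $\ker H_P$ embeds, by restriction, into the coordinate space of the unrecovered punctured variables. The restriction is injective, since any kernel vector is already zero off those coordinates. Therefore $\operatorname{nullity}H_P\leq U_{P,t}(\Gamma)$.

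The only delicate point is Step 2. SPA messages are computed on the graph rather than on a tree, so cycles could in principle cause trouble. We argue directly on the graph, not via density evolution. Erasure-channel messages are either exact values or erasures, and any known message is a valid consequence of the parity constraints. This holds regardless of cycles, because each update applies a true parity relation to previously certified values. We will phrase the induction with the invariant ``every known message on edge $(v,c)$ equals, for all $\mathbf p\in\ker H_P$, the value of $p_v$ implied linearly.'' Under this invariant, cycles cause no issue.
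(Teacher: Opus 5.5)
Your proof is correct, but it argues on the dual side of the paper's proof.

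The paper reorders SPA recovery as a peeling sequence $v_1,\ldots,v_k$. For each $v_i$ it chooses a witnessing check with a single unknown socket and argues that these checks are distinct. Their rows, restricted to the columns $v_1,\ldots,v_k$, are then unit triangular, so $\operatorname{rank}H_P\geq k=n_P-U_{P,t}(\Gamma)$.

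You instead show that $\ker H_P$ lies in the coordinate subspace of the unrecovered punctured variables. You get this from an invariant on the messages themselves: whenever some message on a socket of a punctured variable $v$ is known, $p_v=0$ for every $\mathbf p\in\ker H_P$, i.e.\ $\mathbf e_v\in\operatorname{row}H_P$. By rank--nullity the two conclusions are the same fact.

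Your route has three advantages. It works directly with the synchronous SPA schedule on the graph with cycles, so it does not need the serialization into peeling steps, with tie-breaking among variables recovered in the same iteration. It does not need the distinctness of witnessing checks. Parallel edges are harmless, because a check update sums over all its sockets, which is exactly its row with multiplicities reduced modulo two.

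The paper's route gives an explicit certificate instead: $k$ named rows that are visibly independent.

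In a write-up, state the invariant in exactly this form, with the base case that all messages leaving punctured variables are initially erased. Your phrase about ``a set of rows whose sum, restricted to the remaining unknowns, equals the unit vector'' is looser than what Step 3 actually uses.
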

\begin{proof}
On the BEC, SPA recovery can be ordered as successive removals of variables determined by a check with a single remaining unknown, the usual peeling decoder~\cite[Section II]{Luby2001}. Order the recovered punctured variables as $v_1,\ldots,v_k$, respecting this recovery order, and choose a witnessing check for each. Such a check has coefficient one on $v_i$ and zero on every punctured variable not yet recovered. The witnessing checks are distinct: after recovering their last unknown they cannot recover another variable. Their rows, restricted to columns $v_1,\ldots,v_k$, form a triangular matrix with diagonal entries one. Hence $\operatorname{rank}H_P\geq k=n_P-U_{P,t}$. This proves~\eqref{eq:peelingrank}. The argument also permits parallel edges: a witnessing SPA check has only one unknown incident socket, so its coefficient on that variable is one after reducing multiplicities modulo two.
\end{proof}

\begin{proposition}[One graph with both rate and reliability]\label{prop:finiteMN}
For every $0\leq\eps<1-r/\ell$ and $\zeta,\delta>0$, there is a finite graph $\Gamma$ and a finite iteration count $t$ such that
\begin{equation}\label{eq:mnratedelivery}
 \frac{k_T}{n_T}\geq\frac r\ell-\zeta,
 \qquad E_{T,t}(\Gamma,\eps)\leq\delta.
\end{equation}
Here the second quantity averages over the BEC for this same selected graph.
\end{proposition}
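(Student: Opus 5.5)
We fix $\eps<1-r/\ell$, $\zeta,\delta>0$, and pick a parameter $\eps_0$ with $\max\{\eps,0\}<\eps_0<1-r/\ell$. We apply Theorem~\ref{thm:mnmain} (via Theorem~\ref{thm:transfer} and Remark~\ref{rem:channelconstants}) at $\eps_0$. This gives $w\geq w_0$ and $s=s(w)$, independent of $L$, such that density evolution converges uniformly to zero for every $\eps'\in[0,\eps_0]$ and every torus with $L>s+2w$. Next we choose $L$ large enough that $V_S/(V-V_S)\leq\zeta/3$, using~\eqref{eq:geomratetolerance}. The geometry is then frozen.

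Two density-evolution quantities then need to be small. The first is the transmitted-bit posterior at the actual $\eps$. The second is the punctured-bit posterior in the experiment where all transmitted variables are revealed, i.e.\ $\eps'=0$. Both are covered by the convergence at $\eps'=0$ and $\eps'=\eps$ together with the posterior formulas~\eqref{eq:mnposterior}. We choose one finite $t$ such that $\bar e_{T,t}(\eps)\leq\delta/4$ and $\bar e_{P,t}(0)\leq\eta$, where $\eta$ is chosen so that $\eta\, n_P/n_T=\eta r/\ell\leq\zeta/12$. By Lemma~\ref{lem:finiteDE}, for $M$ large both $\mathbb E_\Gamma E_{T,t}(\Gamma,\eps)\leq\delta/2$ and $\mathbb E_\Gamma E_{P,t}(\Gamma,0)\leq2\eta$.

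To get one graph satisfying both properties, we use Markov's inequality on each nonnegative random variable. The first gives $\Pr[E_{T,t}(\Gamma,\eps)>\delta]\leq1/2$. For the second we use a larger threshold, $\Pr[E_{P,t}(\Gamma,0)>8\eta]\leq1/4$. With positive probability a single graph $\Gamma$ satisfies both bounds. On that graph, $E_{P,t}(\Gamma,0)$ is deterministic given $\Gamma$, since revealing all transmitted bits leaves no channel randomness. It equals $U_{P,t}(\Gamma)/n_P$. Lemma~\ref{lem:peelingrank} then gives $\operatorname{nullity}H_P/n_T\leq8\eta r/\ell\leq2\zeta/3$. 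Substituting into~\eqref{eq:mnactualrate} yields $k_T/n_T\geq r/\ell-\zeta$.

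The main obstacle is to justify that the $\eps'=0$ punctured experiment is governed by the same density evolution, with uniform convergence. At $\eps'=0$ the transmitted messages vanish identically. The punctured recursion $a\mapsto(A_wQ_1(A_wa,0))^{\ell-1}$ is then dominated, by order preservation in the channel, by the recursion at $\eps_0$. This is the channel-monotonicity coupling of Remark~\ref{rem:channelconstants}. Lemma~\ref{lem:finiteDE} explicitly includes this experiment, so the remaining work is bookkeeping of the constants. The cases $\eps=0$ and $\eps>0$ are handled uniformly by the choice of $\eps_0$.
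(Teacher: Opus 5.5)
Your proposal is correct and follows essentially the paper's route. You freeze the geometry with a small shortening fraction, choose $t$ and then $M$ via Lemma~\ref{lem:finiteDE} for the punctured average at parameter zero and the transmitted average at $\eps$, select one graph by Markov's inequality, and bound $\operatorname{nullity}H_P$ through Lemma~\ref{lem:peelingrank}. The differences are cosmetic: you use two separate Markov bounds with a union bound (where the paper applies Markov once to $2X_t/\zeta+Y_t/\delta$), and you add a buffer parameter $\eps_0>\eps$ to invoke channel monotonicity explicitly.
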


Two separate good ensemble averages could in principle be attained by different graphs. The quantities $X_t,Y_t$ in~\eqref{eq:selectionvariables} track the normalized unrecovered punctured count and the transmitted-bit erasure probability on the same graph $\Gamma$. The combined estimate~\eqref{eq:selectionmarkov} ensures that both targets hold for at least one graph. This simultaneous selection connects the nullity bound~\eqref{eq:peelingrank} to reliable transmission, rather than producing unrelated rate and decoding examples.
\begin{proof}
Choose $w$ and $s(w)$ by the density-evolution part of Theorem~\ref{thm:mnmain} (MN decoding and capacity), already proved in Section~\ref{sec:mncompletion}, and then fix $L$ large enough that $L>s(w)+2w$ and $V_S/(V-V_S)\leq\zeta/2$. For the random graph define two nonnegative quantities
\begin{equation}\label{eq:selectionvariables}
 X_t(\Gamma)=\frac{U_{P,t}(\Gamma)}{n_T},\qquad
 Y_t(\Gamma)=E_{T,t}(\Gamma,\eps).
\end{equation}
Revealing all transmitted variables can only help recovery. Thus the punctured density-evolution erasure at parameter zero and the transmitted erasure at parameter $\eps$ both tend to zero. First take $t$ large enough and then $M$ large enough in Lemma~\ref{lem:finiteDE} (finite-iteration approximation) to obtain
\begin{equation}\label{eq:selectionexpectations}
 \mathbb E_\Gamma X_t\leq\frac{\zeta}{8},\qquad
 \mathbb E_\Gamma Y_t\leq\frac{\delta}{4}.
\end{equation}
For the first estimate use $n_P/n_T=r/\ell$ to convert the punctured-type average to the normalization in $X_t$. Markov's inequality now gives
\begin{equation}\label{eq:selectionmarkov}
 \Pr_\Gamma\!\left\{\frac{2X_t}{\zeta}+\frac{Y_t}{\delta}\geq1\right\}
 \leq\mathbb E_\Gamma\!\left[\frac{2X_t}{\zeta}+\frac{Y_t}{\delta}\right]\leq\frac12.
\end{equation}
Consequently some graph has $X_t<\zeta/2$ and $Y_t<\delta$ simultaneously. Lemma~\ref{lem:peelingrank} (recovery and punctured nullity) and the transmitted-rate lower bound~\eqref{eq:mnactualrate} give its rate guarantee. This selection needs only expectations, not a concentration theorem.
\end{proof}

An entropy bound completes the actual-rate limit. Choose $\eps_j\uparrow1-r/\ell$, $\zeta_j,\delta_j\downarrow0$, and a code and iteration count from Proposition~\ref{prop:finiteMN} (simultaneous rate and reliability) for each $j$. The selection proof allows $L$ and admissible $M$ to be increased before selecting the graph. For the $j$th code, require also $L_j\geq j$ and $M_j\geq j$, so that the transmitted block length $n_{T,j}=M_j(L_j^d-s(w_j)^d)$ tends to infinity. For each $j$, suppress these subscripts and let $\mathbf X$ be uniform on the projected transmitted code and $\mathbf Y$ its BEC output. The entropy chain rule and memorylessness give the usual channel-capacity bound~\cite{Shannon1948}:
\begin{equation}\label{eq:mninformationbound}
 I(\mathbf X;\mathbf Y)\leq\sum_{i=1}^{n_T}I(X_i;Y_i)\leq n_T(1-\eps_j).
\end{equation}
For each output $\mathbf y$, SPA correctly determines all but $N_{T,t_j}^{\mathrm{unrec}}(\mathbf y)$ transmitted coordinates. The conditional support therefore has size at most $2^{N_{T,t_j}^{\mathrm{unrec}}(\mathbf y)}$. Averaging its logarithm bounds the conditional entropy by $n_T\delta_j$. Since $H(\mathbf X)=k_T$, we obtain
\begin{equation}\label{eq:mncapacitysandwich}
 \frac r\ell-\zeta_j\leq\frac{k_T}{n_T}
 \leq1-\eps_j+\delta_j\longrightarrow\frac r\ell.
\end{equation}

The tolerance $\delta_j$ in the simultaneous guarantee~\eqref{eq:mnratedelivery} bounds the expected fraction of unrecovered transmitted bits. Therefore the remaining uncertainty obeys $H(\mathbf X\mid\mathbf Y)\leq n_T\delta_j$. Combining this with the mutual-information bound~\eqref{eq:mninformationbound} in $H(\mathbf X)=I(\mathbf X;\mathbf Y)+H(\mathbf X\mid\mathbf Y)$ supplies the upper rate bound. The lower bound comes from the same graph selected by Proposition~\ref{prop:finiteMN} (simultaneous rate and reliability).

The erasure pattern of BEC SPA is independent of the transmitted codeword, so the probabilities in Proposition~\ref{prop:finiteMN} (simultaneous rate and reliability) apply to this uniform input. For every fixed $\eps<1-r/\ell$, eventually $\eps\leq\eps_j$; coupling the erasures makes the same code sequence reliable at $\eps$. This proves Theorem~\ref{thm:mnmain} (MN decoding and capacity) in the average bit-erasure sense. The construction chooses geometry, then iterations, then section size; it makes no interchange of these limits.

\section{Discussion}\label{sec:discussion}
Positive potentials at nonzero homogeneous fixed points lead to flat-boundary advancement, ball expansion, and decoding throughout a torus from a finite shortened region. The first implication uses the endpoint identity and removal of an auxiliary cap; the geometric implications use uniform directional estimates and finite dependence. Small-message contraction completes decoding. Regular LDPC codes satisfy these conditions below their potential threshold. For MN codes, the degree comparisons establish them for all $\ell>r\geq2$, $g\geq2$ below the BEC capacity boundary, and the transmitted-rate analysis completes the capacity statement.

The sufficient sizes in Section~\ref{sec:quantitative} still depend on finite advancement times whose existence is proved analytically. Useful numerical bounds on these times, smaller shortened regions, and finite-length block-erasure estimates are further questions. The constants may grow with the spatial dimension; the theorem fixes $d$ before taking larger tori. Extending the local-region argument to channels with non-scalar message densities would require a corresponding endpoint identity and a way to control finite-time directional perturbations in that density space.

\end{document}